\documentclass[11pt]{article}

\usepackage[margin=1in]{geometry}
\usepackage{amsmath,amssymb,amsthm}
\usepackage{hyperref}
\usepackage{enumitem}
\usepackage{setspace}

 \usepackage{graphicx}
\usepackage{natbib}

\newtheorem{proposition}{Proposition}
\newtheorem{lemma}{Lemma}

\theoremstyle{definition}
\newtheorem{assumption}{Assumption}

\title{Data Sharing and Competition in Learning-by-Deploying Industries\\
{\large\textit{Insights from Robotics and Beyond}}}
\author{Yunjin Tong\\ \normalsize Stanford Graduate School of Business
\and Luca-Andrei Manea\\ \normalsize Stanford Graduate School of Business}
\date{}

\begin{document}
\maketitle

\section{Introduction}

Many modern technologies improve through use. Each unit deployed in the field produces output and, at the same time, generates data that is fed back to train and refine the technology itself, so that the next generation of units is more productive. Autonomous vehicles, logistics systems, predictive maintenance, industrial inspection, and robots all share this feature: deployment is not only production but also an investment in a shared, cumulative learning stock. When that data is pooled across many deployments, the productivity gains spill over to the entire fleet. When it is siloed, each operator learns only from its own history. The architecture of learning, pooled versus fragmented, therefore shapes how fast the technology improves and how the gains are distributed. \smallskip \\
Robots are the most current example. Companies such as Agility Robotics, Figure, Unitree, and Tesla are deploying physical units in warehouses and factories, while the control policies that run them are increasingly trained as foundation models that pool deployment data across an entire fleet. A unit deployed today produces output and generates training data that increases the productivity of every unit tomorrow. This learning structure now sits at the center of an industrial-policy contest. China's 15th Five-Year Plan (2026--2030) elevates ``embodied intelligence'' to a top-line national priority, backed by coordinated procurement, dedicated standardization committees, and vertically integrated platforms that consolidate deployment data across many sites. By 2024 China already installed over half of the world's industrial robots and operated a stock exceeding two million units \citep{thediplomat_2026,uscc_2024}. The United States, by contrast, leads in AI software and high-end research but deploys through a more fragmented landscape of competing platforms and decentralized purchasers, with a comparatively piecemeal policy response \citep{uscc_2024}. The conventional framing of this contrast is static, a race over unit costs and supply chains. But the underlying object is dynamic. What matters is how the architecture of learning interacts with firms' adoption decisions over time and which policy levers shift the resulting equilibrium. \smallskip \\
We study this with a deliberately simple two-period model. 
Symmetric firms make irreversible capacity decisions. Capacity 
in use generates data that feeds a learning curve, which increases 
next-period productivity. The data may be pooled, with all firms 
drawing on a common stock generated by joint deployment, or 
fragmented, with each firm learning only from its own history. 
We follow the learning-by-doing tradition of \citet{arrow_1962}, 
but replace production experience with deployment data as the 
source of productivity growth, a distinction we term 
\emph{learning-by-deploying}. While robots are one motivating 
example, the model applies to any learning-by-deploying industry 
in which use generates feedback data that improves a shared 
technology.
\paragraph{Contribution.} We isolate when pooled learning is socially valuable, when firms will sustain it voluntarily, and how product-market competition changes both answers. In a baseline with an exogenous output price, pooling is unambiguously beneficial and firms underinvest in early deployment relative to a social planner. Once the price is endogenized through downstream Cournot competition, this clean prescription breaks down. Pooling raises every firm's output at once, which decreases the price, so the private value of sharing falls with competition intensity and can turn negative. We characterize a sharing-sustainability threshold, show it is governed by the elasticity of industry demand over the output range that pooling induces, and use a fully numerical solution under general demand to confirm that the qualitative patterns survive once the tractability assumptions are dropped.

\paragraph{Policy implications.} The analysis yields the following points. \\
\textit{Without product-market competition.}
    \begin{itemize}[itemsep=0pt, topsep=0.2em]
    \item Pooling raises welfare but firms underinvest relative to 
    the social optimum, so policy should encourage sharing and 
    subsidize adoption.
    \item Whether pooling raises or lowers early deployment depends 
    on the persistence of the learning curve, not its steepness, 
    so a uniform early-adoption subsidy may be misdirected.
    \end{itemize}
\textit{With product-market competition.}
    \begin{itemize}[itemsep=0pt, topsep=0.2em]
    \item Pooling raises aggregate output and depresses price, so 
    the private gain from sharing falls with competition intensity 
    and can turn negative.
    \item Two industry primitives shape the value of sharing. The 
    first is the persistence of the learning curve. The competitive 
    cost of sharing is large and durable where learning remains 
    steep past the fragmented deployment level, and small where it 
    saturates early.
    \item The second is demand elasticity. The pooling gain depends 
    on elasticity over the relevant output range and need not fall 
    monotonically with the number of competitors. Blanket mandates 
    such as data interoperability are therefore misplaced.
    \end{itemize}
This paper is organized as follows. Section~\ref{sec:simple} develops the 
baseline two-period model with exogenous prices and characterizes 
the welfare ladder and underinvestment results. 
Section~\ref{sec:endo-sharing} endogenizes both the output price 
and the data-sharing decision, derives the sustainability threshold, 
and shows how product-market competition qualifies the baseline 
prescriptions. Section~\ref{sec:numerical-general} solves the full 
model numerically under general inverse demand and confirms that the 
qualitative patterns survive. All proofs are collected in Appendix~\ref{app:proofs}.

\section{Related Work}
This paper contributes to work on production settings with learning curves, where the productivity or cost of a technology changes with cumulative use. The classic learning-by-doing literature treats experience as a source of productivity growth \citep{arrow_1962}, and subsequent empirical and operations work documents both persistence and transfer of learning across industrial settings \citep{argote_epple_1990,argote_beckman_epple_1990,benkard_2000}. A related capacity-investment literature studies the timing, irreversibility, and flexibility of deployment decisions \citep{fine_freund_1990,vanmieghem_2003}. Our model links these two objects. Early capacity is not only productive capacity, but also an input into the future learning state. \smallskip \\
A second line of work studies strategic learning, spillovers, and cooperation under product-market competition. Dynamic models of learning-by-doing show that experience accumulation can reshape market conduct, pricing, and welfare \citep{spence_1981,fudenberg_tirole_1983,ghemawat_spence_1985,cabral_riordan_1994}. The R\&D-spillover and research-joint-venture literature studies when firms internalize or share knowledge generated by investment \citep{daspremont_jacquemin_1988,kamien_muller_zang_1992}. We differ by focusing on the topology of the learning system itself: firms either learn only from their own deployment histories or draw from a pooled stock generated by all adopters. The resulting externality is dynamic and operational, and the private value of pooling depends on the same product-market competition that pooling intensifies. \smallskip \\
We also relate to the economics of data. Data is non-rival and broader access can raise welfare even when firms privately prefer to hoard it \citep{jones_tonetti_2020}. Social and customer data can also create externalities and competitive advantages \citep{bergemann_bonatti_gan_2022,hagiu_wright_2023,pruefer_schottmueller_2021}. Much of this work studies prediction, digital products, or consumer-data markets. Here, the data are generated as a byproduct of productive deployment, so adoption, output, and future model quality are jointly determined. This makes the shape of the learning curve, especially how long marginal learning persists, central to both welfare and the sustainability of sharing. \smallskip \\
Robotics is one motivating example. Recent robot-learning systems and cross-embodiment datasets illustrate how trajectories from many tasks or platforms can train reusable control policies \citep{brohan_2022_rt1,open_x_embodiment_2023}. The same structure can arise in other settings with deployment-generated feedback data, including autonomous vehicles, logistics, predictive maintenance, and industrial inspection. The contribution is to isolate when pooled learning is welfare-improving and when product-market competition can instead make fragmentation privately stable and potentially socially relevant.

\section{Simple Exogenous-Price Model}
\label{sec:simple}
This section develops the simple baseline. Two symmetric firms each operate a stock of installed capacity and capacity in use generates data that improves a shared technology whose productivity rises with cumulative experience. The model is meant to describe any learning-by-doing setting in which deployed units produce data that feeds into a common stock of knowledge. Robotics is one instance but the formal content does not depend on the underlying technology. \smallskip \\
Two symmetric firms $i \in \{1, 2\}$ make irreversible capacity decisions in two periods $t \in \{1, 2\}$. The per-unit revenue is $r > 0$. Capacity acquisition has quadratic cost, but the cost coefficients may differ across periods. The total cost in period 1 is $(c_1/2)K_{i,1}^2$, while period 2 has incremental capacity cost $(c_2/2)(K_{i,2}-K_{i,1})^2$, with
\[
    K_{i,2} \geq K_{i,1}, \qquad c_1>0,\quad c_2>0.
\]
The discount factor is $\delta \in (0, 1]$.

\paragraph{Lagged pooled learning.} Each unit of capacity produces output at productivity $\eta(N_t)$, where $N_t$ is the cumulative learning state at the start of period $t$. The interpretation is that capacity in use generates deployment data, which is processed between periods to update the shared technology, so that all units operating next period benefit from its higher productivity. Specialized to the two-period setting:
\begin{itemize}[leftmargin=1.5em, nosep]
    \item Period 1: no prior data, so $N_1 = 0$ and productivity is the baseline $\eta(0) = \eta_{\min}$.
    \item Period 2: data from period-1 deployment has been processed and used to update the shared technology. The learning state $N_2$ depends on the data-pooling regime:
    \begin{itemize}[leftmargin=1.5em, nosep]
        \item \textbf{Fragmented:} $N_{2,i} = K_{i,1}$. Each firm benefits only from its own data.
        \item \textbf{Pooled:} $N_2 = K_{1,1} + K_{2,1}$. Joint deployment produces a pooled dataset on which both firms draw.
    \end{itemize}
\end{itemize}
Period-2 deployment data would affect period-3 productivity, but there is no period 3 in this model. Thus $K_{i,2}$ affects period-2 output but not period-2 learning. The learning curve $\eta : \mathbb{R}_{\geq 0} \to \mathbb{R}_{\geq 0}$ is twice continuously differentiable, strictly increasing, strictly concave, and bounded, except in the constant-learning benchmark considered below.

\paragraph{Profit.} Firm $i$'s discounted profit is
\[
    \pi_i =
    \underbrace{r K_{i,1}\eta_{\min}
    -\frac{c_1}{2}K_{i,1}^2}_{\text{Period-1 Payoff}}
    +\underbrace{\delta\left[
        r K_{i,2}\eta(N_2)
        -\frac{c_2}{2}(K_{i,2}-K_{i,1})^2
    \right]}_{\text{Period-2 Payoff}}.
\]
Period-1 capacity has two roles. It earns baseline period-1 revenue and it generates data that increases period-2 productivity. Under pooling, the data-generating role creates an externality on the rival's period-2 productivity.

\paragraph{Three regimes.} We compare three subgame-perfect equilibrium concepts at the symmetric solution $K_{1,t} = K_{2,t} = K_t$:
\begin{itemize}[leftmargin=1.5em, nosep]
    \item \textbf{(NF) Fragmented Nash.} Each firm's two-period problem decouples; the symmetric Nash equilibrium coincides with the symmetric fragmented social planner.
    \item \textbf{(NP) Pooled Nash.} Each firm takes the rival's period-1 capacity as given. Each firm benefits from pooled productivity $\eta(K_{1,1}+K_{2,1})$ in period 2, but ignores the benefit of its own period-1 data on the rival.
    \item \textbf{(SP) Pooled Social Planner.} A planner chooses symmetric capacities $(K_1,K_2)$ to maximize total welfare $W=\pi_1+\pi_2$ under the pooled learning rule.
\end{itemize}
Denote equilibrium capacities $(K^*_{\bullet,1},K^*_{\bullet,2})$ and welfares $W^*_{\bullet}$ for $\bullet \in \{\mathrm{NF},\mathrm{NP},\mathrm{SP}\}$.

\subsection{Reduced-Form Characterization}
\label{ssec:reduced-form}

The two-period problem collapses to a one-dimensional problem in period-1 capacity. We first solve the period-2 subproblem given period-1 capacity and the period-2 learning state.
\begin{lemma}[Period-2 capacity rule]\label{lem:period2}
Fix period-1 capacity $K_{i,1}$ and the period-2 learning state $N$ with $\eta(N)>0$. The period-2 problem
\[
    \max_{K_{i,2}\geq K_{i,1}}
    rK_{i,2}\eta(N)-\frac{c_2}{2}(K_{i,2}-K_{i,1})^2
\]
has the unique solution
\[
    K_{i,2}^*(K_{i,1},N)\;=\;K_{i,1}+\frac{r\eta(N)}{c_2}.
\]
\end{lemma}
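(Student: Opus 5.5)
The plan is to reduce the period-2 problem to a one-variable concave quadratic. We substitute the incremental capacity $x = K_{i,2}-K_{i,1}\geq 0$ and drop the constant $rK_{i,1}\eta(N)$, since it does not depend on the choice variable. The objective then becomes
\[
    g(x) = r\eta(N)\,x - \frac{c_2}{2}x^2, \qquad x\geq 0,
\]
which we maximize over $x$. Because $c_2>0$, the second derivative $g''(x)=-c_2<0$, so $g$ is strictly concave on $\mathbb{R}$. Strict concavity gives uniqueness of any maximizer on the convex feasible set $[0,\infty)$. Existence follows because $g(x)\to-\infty$ as $x\to\infty$, so $g$ is coercive.

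Next we locate the unconstrained stationary point from the first-order condition $g'(x)=r\eta(N)-c_2x=0$, which gives $x^*=r\eta(N)/c_2$. Since $r>0$, $c_2>0$ and $\eta(N)>0$ by hypothesis, we have $x^*>0$. Hence the unconstrained maximizer lies in the interior of the feasible set and the constraint $K_{i,2}\ge K_{i,1}$ is slack. Formally, the KKT conditions hold at $x^*$ with zero multiplier, and by strict concavity they are sufficient as well as necessary. Translating back gives
\[
    K_{i,2}^*(K_{i,1},N)=K_{i,1}+\frac{r\eta(N)}{c_2}.
\]

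The only step needing care is checking that the irreversibility constraint does not bind. This is exactly where the assumption $\eta(N)>0$ is used: if $\eta(N)=0$, the solution would be the corner $x=0$, which here coincides with the stationary point, but strict positivity keeps the interior case clean. Everything else is routine.
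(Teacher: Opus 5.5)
Your proposal is correct and follows essentially the same argument as the paper's proof: you substitute the increment $K_{i,2}-K_{i,1}\geq 0$, note that the resulting quadratic is strictly concave, and observe that the unconstrained stationary point $r\eta(N)/c_2$ is strictly positive, so the irreversibility constraint is slack. Your explicit remarks on existence (coercivity) and uniqueness (strict concavity) simply spell out what the paper leaves implicit.
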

\noindent Substituting Lemma~\ref{lem:period2} into the period-2 payoff defines the per-firm optimized continuation value
\begin{equation}\label{eq:phi-def}
    \phi(K, N)
    \;:=\;
    \underbrace{rK\,\eta(N)}_{\text{revenue from already-installed capacity}}
    \;+\;
    \underbrace{\frac{r^2}{2c_2}\,\eta(N)^2}_{\text{optimized value of period-2 expansion}}.
\end{equation}
The map $N \mapsto \phi(K,N)$ is non-decreasing for any $K\geq 0$ and strictly increasing whenever $\eta'(N)>0$, because
\[
    \partial_N \phi(K,N) \;=\; \eta'(N)\Bigl[\,rK + \tfrac{r^2}{c_2}\eta(N)\Bigr] \;\geq\; 0.
\]
This monotonicity is the only property of $\phi$ used in the welfare-ladder lemma below. \smallskip \\
Now, each regime is determined by how period-1 capacities determine the period-2 learning state $N$. At a symmetric profile $K_{1,1}=K_{2,1}=x$ (and, for pooled Nash, at an asymmetric profile $(x_i, x_j)$), we have
\begin{align}
    V_{\mathrm{NF}}(x)
    &\;:=\; r\eta_{\min}x - \tfrac{c_1}{2}x^2 + \delta\,\phi(x,\,x),
        \label{eq:V-NF}\\
    V_{\mathrm{SP}}(x)
    &\;:=\; r\eta_{\min}x - \tfrac{c_1}{2}x^2 + \delta\,\phi(x,\,2x),
        \label{eq:V-SP}\\
    V_i^{\mathrm{NP}}(x_i,x_j)
    &\;:=\; r\eta_{\min}x_i - \tfrac{c_1}{2}x_i^2 + \delta\,\phi(x_i,\,x_i+x_j).
        \label{eq:V-NP}
\end{align}
By construction, $V_i^{\mathrm{NP}}(x,x) = V_{\mathrm{SP}}(x)$ at any symmetric profile. At symmetric play the per-firm pooled-Nash payoff equals the per-firm planner value. \smallskip \\
Then, differentiating each reduced objective with respect to firm $i$'s period-1 capacity at the symmetric point yields one FOC per regime,
\begin{align}
    F_{\mathrm{NF}}(x)
    &=
    r\eta_{\min}-c_1 x
    +\delta\!\left[
        r\bigl(\eta(x)+x\eta'(x)\bigr)
        +\frac{r^2}{c_2}\eta(x)\eta'(x)
    \right], \label{eq:F-NF}\\
    F_{\mathrm{NP}}(x)
    &=
    r\eta_{\min}-c_1 x
    +\delta\!\left[
        r\bigl(\eta(2x)+x\eta'(2x)\bigr)
        +\frac{r^2}{c_2}\eta(2x)\eta'(2x)
    \right], \label{eq:F-NP}\\
    F_{\mathrm{SP}}(x)
    &=
    r\eta_{\min}-c_1 x
    +\delta\!\left[
        r\bigl(\eta(2x)+2x\eta'(2x)\bigr)
        +\frac{2r^2}{c_2}\eta(2x)\eta'(2x)
    \right]. \label{eq:F-SP}
\end{align}
Here $F_{\mathrm{NF}}=V_{\mathrm{NF}}'$ is the fragmented FOC (Nash and planner coincide), $F_{\mathrm{NP}}=\partial_{x_i} V_i^{\mathrm{NP}}\!\bigl|_{x_i=x_j=x}$ is the pooled-Nash FOC, and $F_{\mathrm{SP}}=V_{\mathrm{SP}}'$ is the pooled-planner FOC. $F_{\mathrm{NP}}$ differs from $F_{\mathrm{NF}}$ only in evaluating $\eta$ and $\eta'$ at $2x$ rather than $x$. $F_{\mathrm{SP}}$ differs from $F_{\mathrm{NP}}$ by the additional terms $\delta\eta'(2x)[\,rx + (r^2/c_2)\eta(2x)\,]$, which are exactly the externality firm $i$'s period-1 data exerts on the rival's period-2 productivity, ignored by the pooled-Nash firm, internalized by the planner. \smallskip \\
Finally, let $x_{\mathrm{NF}}$, $x_{\mathrm{NP}}$, $x_{\mathrm{SP}}$ denote the symmetric first-period capacities solving $F_{\mathrm{NF}}=0$, $F_{\mathrm{NP}}=0$, $F_{\mathrm{SP}}=0$, respectively. Lemma~\ref{lem:period2} then gives the period-2 capacities
\[
    K^*_{\mathrm{NF},2}=x_{\mathrm{NF}}+\frac{r\eta(x_{\mathrm{NF}})}{c_2},\qquad
    K^*_{\mathrm{NP},2}=x_{\mathrm{NP}}+\frac{r\eta(2x_{\mathrm{NP}})}{c_2},\qquad
    K^*_{\mathrm{SP},2}=x_{\mathrm{SP}}+\frac{r\eta(2x_{\mathrm{SP}})}{c_2}.
\]
Total welfare in each regime is twice the per-firm reduced value at the equilibrium $x$, so we have
\begin{equation}\label{eq:welfare}
    W^*_{\mathrm{NF}}=2V_{\mathrm{NF}}(x_{\mathrm{NF}}),\qquad
    W^*_{\mathrm{NP}}=2V_{\mathrm{SP}}(x_{\mathrm{NP}}),\qquad
    W^*_{\mathrm{SP}}=2V_{\mathrm{SP}}(x_{\mathrm{SP}}).
\end{equation}
The pooled-Nash total welfare is written in terms of $V_{\mathrm{SP}}$ rather than $V_i^{\mathrm{NP}}$ because $V_i^{\mathrm{NP}}(x,x)=V_{\mathrm{SP}}(x)$ at symmetric play.

\subsection{Regularity and Welfare-Ladder Lemma}
\label{ssec:regularity}

\begin{assumption}[Regularity]\label{ass:regularity}
On the relevant capacity domain, $\eta(N)>0$. The reduced objectives $V_{\mathrm{NF}}$ and $V_{\mathrm{SP}}$ are strictly concave, and $V_i^{\mathrm{NP}}(\cdot,x_j)$ is strictly concave for every relevant $x_j$. The first-order conditions $F_{\mathrm{NF}}$, $F_{\mathrm{NP}}$, and $F_{\mathrm{SP}}$ are strictly decreasing and have unique interior roots $x_{\mathrm{NF}}, x_{\mathrm{NP}}, x_{\mathrm{SP}} \in (0,\infty)$.
\end{assumption}
\noindent Assumption~\ref{ass:regularity} is the only maintained assumption beyond the primitive smoothness, monotonicity, concavity, and boundedness of $\eta$ stated in the setup. It is satisfied whenever $c_1$ is large enough relative to the curvature of $\eta$ and it rules out non-uniqueness of the symmetric equilibrium in each regime.

\begin{lemma}[Welfare-ladder lemma]\label{lem:ladder}
Under Assumption~\ref{ass:regularity},
\[
    V_{\mathrm{SP}}(x_{\mathrm{NP}}) \;\geq\; V_{\mathrm{NF}}(x_{\mathrm{NF}}),
\]
with strict inequality whenever $\eta$ is strictly increasing on the relevant domain.
\end{lemma}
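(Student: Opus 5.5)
The plan is a revealed-preference argument. It uses only the Nash best-response property of $x_{\mathrm{NP}}$ and the monotonicity of $\phi$ in $N$ noted after \eqref{eq:phi-def}. We never compare $x_{\mathrm{NP}}$ and $x_{\mathrm{NF}}$ directly. Instead we let firm $i$ deviate, in the pooled game, to the fragmented capacity $x_{\mathrm{NF}}$ while the rival stays at $x_{\mathrm{NP}}$.

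First, we confirm that $x_{\mathrm{NP}}$ is a genuine best response to itself. By Assumption~\ref{ass:regularity}, $V_i^{\mathrm{NP}}(\cdot,x_{\mathrm{NP}})$ is strictly concave. Its derivative at $x_i=x_{\mathrm{NP}}$ is $F_{\mathrm{NP}}(x_{\mathrm{NP}})=0$. Hence $x_{\mathrm{NP}}$ is the global maximizer of $V_i^{\mathrm{NP}}(\cdot,x_{\mathrm{NP}})$ on $\mathbb{R}_{\ge 0}$. In particular,
\[
V_{\mathrm{SP}}(x_{\mathrm{NP}})=V_i^{\mathrm{NP}}(x_{\mathrm{NP}},x_{\mathrm{NP}})\;\ge\;V_i^{\mathrm{NP}}(x_{\mathrm{NF}},x_{\mathrm{NP}}),
\]
where the equality is the symmetric-play identity stated after \eqref{eq:V-NP}.

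Second, we compare the deviation payoff with the fragmented value. From \eqref{eq:V-NF} and \eqref{eq:V-NP}, the period-1 terms in $x_{\mathrm{NF}}$ are identical, so
\[
V_i^{\mathrm{NP}}(x_{\mathrm{NF}},x_{\mathrm{NP}})-V_{\mathrm{NF}}(x_{\mathrm{NF}})=\delta\bigl[\phi(x_{\mathrm{NF}},x_{\mathrm{NF}}+x_{\mathrm{NP}})-\phi(x_{\mathrm{NF}},x_{\mathrm{NF}})\bigr].
\]
Since $x_{\mathrm{NP}}>0$ by Assumption~\ref{ass:regularity}, the learning state rises from $x_{\mathrm{NF}}$ to $x_{\mathrm{NF}}+x_{\mathrm{NP}}$. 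Monotonicity of $N\mapsto\phi(K,N)$ then makes this difference nonnegative, and chaining the two inequalities gives the weak claim.

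For strictness, suppose $\eta'>0$ on $[x_{\mathrm{NF}},x_{\mathrm{NF}}+x_{\mathrm{NP}}]$. Then
\[
\partial_N\phi=\eta'(N)\bigl[rx_{\mathrm{NF}}+(r^2/c_2)\eta(N)\bigr]>0,
\]
because $\eta>0$, $r>0$ and $x_{\mathrm{NF}}>0$. Integrating over this interval of positive length gives a strict inequality.

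The only delicate step is the first one: we need the best-response inequality globally, not just at the first-order condition. This is exactly where strict concavity of $V_i^{\mathrm{NP}}(\cdot,x_j)$ at $x_j=x_{\mathrm{NP}}$ from Assumption~\ref{ass:regularity} is invoked. We must also check that the deviation point $x_{\mathrm{NF}}$, and the learning states it induces, lie in the relevant domain where the assumption applies.
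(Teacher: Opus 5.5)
Your proposal is correct and follows the paper's proof essentially step for step: firm $i$ deviates to $x_{\mathrm{NF}}$ against a rival at $x_{\mathrm{NP}}$, the Nash inequality bounds the deviation payoff, and monotonicity of $\phi$ in $N$ links that payoff to $V_{\mathrm{NF}}(x_{\mathrm{NF}})$. Your extra step, using strict concavity of $V_i^{\mathrm{NP}}(\cdot,x_{\mathrm{NP}})$ to turn the first-order condition into a global best response, spells out what the paper simply calls ``the Nash condition.''
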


\subsection{Main Results}

\begin{proposition}[Dynamic welfare ladder]\label{prop:ladder}
Under Assumption~\ref{ass:regularity},
\[
    W^*_{\mathrm{NF}}\leq W^*_{\mathrm{NP}}\leq W^*_{\mathrm{SP}}.
\]
The first inequality is strict whenever $\eta$ is strictly increasing on the relevant domain. The second inequality is strict whenever $x_{\mathrm{NP}}\neq x_{\mathrm{SP}}$.
\end{proposition}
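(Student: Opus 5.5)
The plan is to reduce both inequalities to statements about the one-dimensional reduced objectives of Section~\ref{ssec:reduced-form}, using the welfare identities in \eqref{eq:welfare}: $W^*_{\mathrm{NF}}=2V_{\mathrm{NF}}(x_{\mathrm{NF}})$, $W^*_{\mathrm{NP}}=2V_{\mathrm{SP}}(x_{\mathrm{NP}})$ and $W^*_{\mathrm{SP}}=2V_{\mathrm{SP}}(x_{\mathrm{SP}})$. Since the factor $2$ is common to all three, it suffices to show
\[
V_{\mathrm{NF}}(x_{\mathrm{NF}})\le V_{\mathrm{SP}}(x_{\mathrm{NP}})\le V_{\mathrm{SP}}(x_{\mathrm{SP}}),
\]
together with the stated strictness conditions.

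\emph{First inequality.} This is exactly Lemma~\ref{lem:ladder}, which I take as given. It also delivers strictness whenever $\eta$ is strictly increasing on the relevant domain. For the reader's orientation, the lemma's mechanism is presumably the chain
\[
V_{\mathrm{SP}}(x_{\mathrm{NP}})=V^{\mathrm{NP}}_i(x_{\mathrm{NP}},x_{\mathrm{NP}})\ge V^{\mathrm{NP}}_i(x_{\mathrm{NF}},x_{\mathrm{NP}})\ge V_{\mathrm{NF}}(x_{\mathrm{NF}}).
\]
The middle step is best-response optimality of $x_{\mathrm{NP}}$. The last step uses monotonicity of $\phi$ in $N$ together with $x_{\mathrm{NF}}+x_{\mathrm{NP}}\ge x_{\mathrm{NF}}$, with strictness coming from $\partial_N\phi>0$ and $x_{\mathrm{NP}}>0$. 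None of this needs to be redone here.

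\emph{Second inequality.} Under Assumption~\ref{ass:regularity}, $V_{\mathrm{SP}}$ is strictly concave and $F_{\mathrm{SP}}=V_{\mathrm{SP}}'$ has the unique root $x_{\mathrm{SP}}\in(0,\infty)$. Hence $x_{\mathrm{SP}}$ is the unique global maximizer of $V_{\mathrm{SP}}$ on the relevant domain, which immediately gives $V_{\mathrm{SP}}(x_{\mathrm{NP}})\le V_{\mathrm{SP}}(x_{\mathrm{SP}})$. Strict concavity also gives uniqueness of the maximizer, so the inequality is strict whenever $x_{\mathrm{NP}}\ne x_{\mathrm{SP}}$.

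\emph{Main subtlety.} The only point needing care is that $x_{\mathrm{NP}}$ lies in the domain on which $V_{\mathrm{SP}}$ is maximized, so that comparing it with the global maximizer is legitimate. Assumption~\ref{ass:regularity} places all three roots in $(0,\infty)$ and imposes concavity on the common relevant capacity domain, so I will simply invoke that. Doubling both inequalities and chaining them then yields the proposition.
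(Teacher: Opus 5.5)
Your proposal is correct and matches the paper's proof. It uses the same route: the welfare identities in \eqref{eq:welfare} reduce the claim to $V_{\mathrm{NF}}(x_{\mathrm{NF}})\le V_{\mathrm{SP}}(x_{\mathrm{NP}})\le V_{\mathrm{SP}}(x_{\mathrm{SP}})$, Lemma~\ref{lem:ladder} gives the first step, and optimality of $x_{\mathrm{SP}}$ for the strictly concave $V_{\mathrm{SP}}$ gives the second, with uniqueness of the maximizer supplying strictness when $x_{\mathrm{NP}}\neq x_{\mathrm{SP}}$.
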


\begin{proposition}[Strategic underinvestment]\label{prop:underinvestment}
Under Assumption~\ref{ass:regularity},
\[
    K^*_{\mathrm{NP},1}\leq K^*_{\mathrm{SP},1},
    \qquad
    K^*_{\mathrm{NP},2}\leq K^*_{\mathrm{SP},2}.
\]
If $\eta'(N)>0$ on the relevant domain, both inequalities are strict.
\end{proposition}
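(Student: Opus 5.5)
The plan is to compare the two first-order conditions at the pooled-Nash root and then use the monotonicity of $F_{\mathrm{SP}}$ from Assumption~\ref{ass:regularity}; the period-2 comparison will follow from Lemma~\ref{lem:period2}.

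\emph{Period 1.} Evaluate $F_{\mathrm{SP}}$ at $x_{\mathrm{NP}}$. From \eqref{eq:F-NP} and \eqref{eq:F-SP},
\[
F_{\mathrm{SP}}(x)-F_{\mathrm{NP}}(x)=\delta\,\eta'(2x)\Bigl[rx+\tfrac{r^2}{c_2}\eta(2x)\Bigr].
\]
This difference is nonnegative at any $x\ge 0$, because $\delta>0$, $\eta'\ge 0$ and $\eta>0$ on the relevant domain. Since $F_{\mathrm{NP}}(x_{\mathrm{NP}})=0$, we get $F_{\mathrm{SP}}(x_{\mathrm{NP}})\ge 0=F_{\mathrm{SP}}(x_{\mathrm{SP}})$. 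Assumption~\ref{ass:regularity} says $F_{\mathrm{SP}}$ is strictly decreasing, so $x_{\mathrm{NP}}\le x_{\mathrm{SP}}$, that is, $K^*_{\mathrm{NP},1}\le K^*_{\mathrm{SP},1}$. If $\eta'>0$, the bracket is strictly positive: $x_{\mathrm{NP}}>0$ and $\eta(2x_{\mathrm{NP}})>0$. Then $F_{\mathrm{SP}}(x_{\mathrm{NP}})>0$, and strict monotonicity gives $x_{\mathrm{NP}}<x_{\mathrm{SP}}$.

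\emph{Period 2.} By Lemma~\ref{lem:period2}, $K^*_{\bullet,2}=g(x_\bullet)$ with $g(x):=x+r\eta(2x)/c_2$. Because $\eta$ is non-decreasing, $g$ is strictly increasing: its identity part is strictly increasing and its second part is non-decreasing. Applying $g$ to $x_{\mathrm{NP}}\le x_{\mathrm{SP}}$ gives $K^*_{\mathrm{NP},2}\le K^*_{\mathrm{SP},2}$. When $x_{\mathrm{NP}}<x_{\mathrm{SP}}$, strict monotonicity of $g$ makes this inequality strict as well.

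The argument has no real obstacle. The only point needing care is that the externality term is evaluated at the pooled-Nash root, where $x_{\mathrm{NP}}\in(0,\infty)$ lies in the relevant domain, so positivity of $\eta$ and the sign of $\eta'$ apply there. Strictness also has to come from strict monotonicity of $F_{\mathrm{SP}}$, not from concavity alone.
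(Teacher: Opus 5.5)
Your proposal is correct and matches the paper's proof step for step. Both evaluate the FOC difference $F_{\mathrm{SP}}-F_{\mathrm{NP}}=\delta\eta'(2x)[rx+(r^2/c_2)\eta(2x)]$ at $x_{\mathrm{NP}}$, use strict monotonicity of $F_{\mathrm{SP}}$ to get $x_{\mathrm{NP}}\le x_{\mathrm{SP}}$, and then apply monotonicity of $g(x)=x+r\eta(2x)/c_2$ from Lemma~\ref{lem:period2} for period 2. You also spell out that the strict period-2 inequality follows from strict monotonicity of $g$ together with the strict period-1 inequality, a step the paper leaves implicit.
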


\begin{proposition}[Capacity reversal sign test]\label{prop:reversal}
Under Assumption~\ref{ass:regularity}, define
\[
    D(x)
    \;=\;
    r\bigl[\eta(2x)-\eta(x)+x\bigl(\eta'(2x)-\eta'(x)\bigr)\bigr]
    +\frac{r^2}{c_2}\bigl[\eta(2x)\eta'(2x)-\eta(x)\eta'(x)\bigr].
\]
Then
\[
    K^*_{\mathrm{NP},1}>K^*_{\mathrm{NF},1}
    \quad\Longleftrightarrow\quad
    D(K^*_{\mathrm{NF},1})>0,
\]
and $K^*_{\mathrm{NP},1}=K^*_{\mathrm{NF},1}$ if and only if $D(K^*_{\mathrm{NF},1})=0$.
\end{proposition}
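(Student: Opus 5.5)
The plan is to compare the two first-order conditions at the fragmented root and then use the monotonicity of $F_{\mathrm{NP}}$ from Assumption~\ref{ass:regularity}. Compare the displayed expressions \eqref{eq:F-NF} and \eqref{eq:F-NP}. The terms $r\eta_{\min}-c_1x$ are identical, and the bracketed terms differ only in evaluating $\eta,\eta'$ at $2x$ instead of $x$. Subtracting gives, for every $x\ge 0$,
\[
    F_{\mathrm{NP}}(x)-F_{\mathrm{NF}}(x)=\delta\,D(x),
\]
with $D$ exactly as defined in the statement. This is a direct algebraic check. Expand $r(\eta(2x)+x\eta'(2x))-r(\eta(x)+x\eta'(x))$ and $\frac{r^2}{c_2}(\eta(2x)\eta'(2x)-\eta(x)\eta'(x))$, and confirm they match the two brackets of $D$.

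Next, evaluate this identity at $x_{\mathrm{NF}}=K^*_{\mathrm{NF},1}$. Since $F_{\mathrm{NF}}(x_{\mathrm{NF}})=0$, we get $F_{\mathrm{NP}}(x_{\mathrm{NF}})=\delta D(x_{\mathrm{NF}})$. Because $\delta>0$, the sign of $F_{\mathrm{NP}}(x_{\mathrm{NF}})$ equals the sign of $D(x_{\mathrm{NF}})$.

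By Assumption~\ref{ass:regularity}, $F_{\mathrm{NP}}$ is strictly decreasing with unique root $x_{\mathrm{NP}}=K^*_{\mathrm{NP},1}$. Hence $F_{\mathrm{NP}}(x)>0$ if and only if $x<x_{\mathrm{NP}}$, and $F_{\mathrm{NP}}(x)=0$ if and only if $x=x_{\mathrm{NP}}$. Applying this at $x=x_{\mathrm{NF}}$ gives both parts of the statement:
\begin{itemize}
    \item $D(x_{\mathrm{NF}})>0$ if and only if $x_{\mathrm{NF}}<x_{\mathrm{NP}}$;
    \item $D(x_{\mathrm{NF}})=0$ if and only if $x_{\mathrm{NF}}=x_{\mathrm{NP}}$.
\end{itemize}

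The only delicate point is a domain issue. The assumption guarantees strict monotonicity of $F_{\mathrm{NP}}$ only on the relevant capacity domain, so $x_{\mathrm{NF}}$ must lie in that domain where $F_{\mathrm{NP}}$ is defined and decreasing. Taking the relevant domain to be an interval containing all three roots, as Assumption~\ref{ass:regularity} implicitly does, this holds automatically. Beyond that, the argument is a sign comparison and needs nothing further, not even concavity of $\eta$.
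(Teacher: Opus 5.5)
Your proposal is correct and follows the paper's proof: you subtract the two first-order conditions to get $F_{\mathrm{NP}}(x)-F_{\mathrm{NF}}(x)=\delta D(x)$, evaluate at the fragmented root where $F_{\mathrm{NF}}$ vanishes, and use strict monotonicity of $F_{\mathrm{NP}}$ and its unique root. Your remark that $x_{\mathrm{NF}}$ must lie in the domain where $F_{\mathrm{NP}}$ is strictly decreasing is a reasonable addition; the paper leaves it implicit.
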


\noindent The sign test $D(x)$ collects two effects of doubling the period-2 learning state from $x$ (fragmented) to $2x$ (pooled). The first is the level effect. $r\bigl[\eta(2x)-\eta(x)\bigr]$ measures the lift in period-2 productivity on already-installed capacity. By monotonicity of $\eta$ this term is non-negative and it is strictly positive whenever $\eta$ is strictly increasing. There also is the slope effect. The remainder, $rx\bigl[\eta'(2x)-\eta'(x)\bigr] + (r^2/c_2)\bigl[\eta(2x)\eta'(2x)-\eta(x)\eta'(x)\bigr]$, measures how pooling changes the marginal return to firm $i$'s own period-1 capacity, via the dependence of $\phi$ on the learning state. Strict concavity of $\eta$ gives $\eta'(2x)\leq \eta'(x)$, so the first piece is non-positive and the second piece pairs a higher level $\eta(2x)\geq\eta(x)$ with a (weakly) lower slope $\eta'(2x)\leq\eta'(x)$, and can take either sign. \smallskip \\
The sign of $D(x)$ therefore turns on how persistently the learning curve keeps learning between $x$ and $2x$. If marginal learning persists ($\eta'(2x)$ close to $\eta'(x)$, learning curve still steep at the pooled state), the level effect dominates, $D(x)>0$, and pooled Nash invests more than fragmented Nash. The pool's productivity boost reinforces the marginal value of own period-1 capacity. If marginal learning decays sharply ($\eta'(2x)\ll\eta'(x)$, learning saturates quickly past $x$), the slope effect dominates, $D(x)<0$, and pooled Nash invests less than fragmented Nash. Once the rival's data has already lifted productivity to near saturation, an individual firm has little additional reason to deploy capacity for its data-generation role. \smallskip \\
The sign of $D$ is not pinned down by monotonicity and concavity of $\eta$ alone. It is a quantitative property of how learning is distributed across the relevant range. The mapping $K^*_{\mathrm{NP},1}\gtrless K^*_{\mathrm{NF},1}$ is, in this sense, an empirical question about the shape of the learning curve. 

\begin{proposition}[Welfare wedges]\label{prop:cs}
Define
\[
    \Delta_{\mathrm{pool}}=W^*_{\mathrm{NP}}-W^*_{\mathrm{NF}},
    \qquad
    \Delta_{\mathrm{coord}}=W^*_{\mathrm{SP}}-W^*_{\mathrm{NP}}.
\]
Under Assumption~\ref{ass:regularity}, we have
\begin{itemize}[leftmargin=2em, nosep]
\item[\textup{(a)}] $\Delta_{\mathrm{pool}}\geq 0$ and $\Delta_{\mathrm{coord}}\geq 0$.
\item[\textup{(b)}] If learning is constant ($\eta(N)\equiv\eta_{\min}$), then $\Delta_{\mathrm{pool}}=\Delta_{\mathrm{coord}}=0$.
\item[\textup{(c)}] If $\eta$ is strictly increasing on the relevant domain, then $\Delta_{\mathrm{pool}}>0$ and $\Delta_{\mathrm{coord}}>0$.
\end{itemize}
\end{proposition}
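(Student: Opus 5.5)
Part (a) follows directly from Proposition~\ref{prop:ladder}. It gives $W^*_{\mathrm{NF}}\le W^*_{\mathrm{NP}}\le W^*_{\mathrm{SP}}$, so $\Delta_{\mathrm{pool}}\ge 0$ and $\Delta_{\mathrm{coord}}\ge 0$. The remaining work is in (b) and (c), and I would handle them by comparing the first-order conditions \eqref{eq:F-NF}--\eqref{eq:F-SP}.

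For (b), set $\eta\equiv\eta_{\min}$, so that $\eta'\equiv 0$. Then the three FOCs all collapse to
\[
F(x)=r\eta_{\min}-c_1x+\delta r\eta_{\min},
\]
and their unique roots coincide: $x_{\mathrm{NF}}=x_{\mathrm{NP}}=x_{\mathrm{SP}}=r\eta_{\min}(1+\delta)/c_1$. Since $\phi(x,N)$ depends on $N$ only through $\eta(N)=\eta_{\min}$, we get $\phi(x,x)=\phi(x,2x)$, hence $V_{\mathrm{NF}}\equiv V_{\mathrm{SP}}$. By \eqref{eq:welfare}, all three welfares are equal, and both wedges vanish.

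Constant $\eta$ lies outside the primitive strict-monotonicity assumption. I would note that the argument uses only the formulas for $V$ and $F$, together with the unique root, which here is explicit.

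For (c), strictness of $\Delta_{\mathrm{pool}}$ is the strict case of Lemma~\ref{lem:ladder}, equivalently the strict first inequality in Proposition~\ref{prop:ladder}. For $\Delta_{\mathrm{coord}}$, Proposition~\ref{prop:ladder} says it suffices to show $x_{\mathrm{NP}}\neq x_{\mathrm{SP}}$, and this is also delivered by Proposition~\ref{prop:underinvestment}. To show it directly, evaluate $F_{\mathrm{SP}}$ at $x_{\mathrm{NP}}$:
\[
F_{\mathrm{SP}}(x_{\mathrm{NP}})=F_{\mathrm{NP}}(x_{\mathrm{NP}})+\delta\eta'(2x_{\mathrm{NP}})\Bigl[r x_{\mathrm{NP}}+\tfrac{r^2}{c_2}\eta(2x_{\mathrm{NP}})\Bigr].
\]
The first term is zero. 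The bracket is strictly positive because $x_{\mathrm{NP}}>0$ and $\eta>0$ under Assumption~\ref{ass:regularity}. The factor $\eta'(2x_{\mathrm{NP}})$ is strictly positive when $\eta$ is strictly increasing.

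That last point is the main obstacle. Strict monotonicity only gives $\eta'\ge0$ pointwise, with possible isolated zeros. I would first use the interpretation in Proposition~\ref{prop:underinvestment}, where $\eta'>0$ on the relevant domain.

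If only strict monotonicity is available, I would argue as follows. Suppose $x_{\mathrm{NP}}=x_{\mathrm{SP}}$. Then the displayed wedge vanishes, so $\eta'(2x_{\mathrm{NP}})=0$. Strict concavity makes $\eta'$ strictly decreasing, so $\eta'<0$ beyond $2x_{\mathrm{NP}}$. This contradicts strict increase of $\eta$ on any part of the domain past that point.

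The boundary case, where the relevant domain ends exactly at $2x_{\mathrm{NP}}$, is degenerate. I would rule it out by reading "relevant domain" as an open neighborhood of the equilibrium learning state.

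With $F_{\mathrm{SP}}(x_{\mathrm{NP}})>0$ and $F_{\mathrm{SP}}$ strictly decreasing, we get $x_{\mathrm{SP}}>x_{\mathrm{NP}}$. Because $V_{\mathrm{SP}}$ is strictly concave with a unique maximizer at $x_{\mathrm{SP}}$, it follows that $V_{\mathrm{SP}}(x_{\mathrm{SP}})>V_{\mathrm{SP}}(x_{\mathrm{NP}})$, so $\Delta_{\mathrm{coord}}>0$.
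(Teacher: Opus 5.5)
Your proposal is correct and follows the paper's proof. Part (a) comes from the ladder lemma together with optimality of $x_{\mathrm{SP}}$. Part (b) collapses all three first-order conditions to the common root $(1+\delta)r\eta_{\min}/c_1$. Part (c) uses the strict case of Lemma~\ref{lem:ladder} plus the wedge $F_{\mathrm{SP}}(x_{\mathrm{NP}})>0$ from Proposition~\ref{prop:underinvestment}. Your extra step also closes a gap the paper leaves implicit: you use strict concavity to show that a strictly increasing $\eta$ must have $\eta'(2x_{\mathrm{NP}})>0$, whereas the paper simply invokes the hypothesis $\eta'>0$.
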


\noindent The wedges are signed but their global dependence on $(\delta, c_1, c_2)$ and on a one-dimensional learning intensity is not monotone in general, because changes in these parameters shift the three capacities $x_{\mathrm{NF}}, x_{\mathrm{NP}}, x_{\mathrm{SP}}$ in directions that need not align. The constant-learning point in part~(b) is the only configuration in which the wedges vanish identically, so any deformation away from constant learning opens both wedges. This is the substantive content of part~(c). A monotone global comparative-statics statement would require additional restrictions on the joint movement of the three capacities, which we do not impose.

\paragraph{Discussion.} The four propositions deliver a clean baseline. Proposition~\ref{prop:ladder} says that pooled regimes weakly dominate fragmented ones in welfare, with strict gain whenever the learning curve is strictly increasing. Proposition~\ref{prop:underinvestment} says that, even within a pooled regime, firms underinvest in period-1 capacity relative to the planner, because they ignore the rival's productivity gain from their own period-1 data. The period-2 gap is inherited from the period-1 wedge, with no direct period-2 externality in this two-period horizon. Proposition~\ref{prop:reversal} converts the comparison between pooled-Nash and fragmented-Nash period-1 capacity into a sign test on $D(K^*_{\mathrm{NF},1})$. The test is robust as a test but not as a sign, because the shape of $\eta$ over $[K^*_{\mathrm{NF},1}, 2K^*_{\mathrm{NF},1}]$ determines whether pooled Nash invests more or less. Proposition~\ref{prop:cs} closes the section by noting that both welfare wedges are positive under any non-trivial learning curve, with the constant-learning case as the only point at which they vanish. The robust results are the welfare ladder and pooled-Nash underinvestment. The capacity comparison and the magnitude of the wedges are sign tests whose answers depend on the persistence of the learning curve.


\section{Product-Market Competition with Endogenous Data Sharing}
\label{sec:endo-sharing}

The baseline of Section \ref{sec:simple} treats the per-unit revenue $r$ as exogenous and the data-sharing architecture as a fixed regime. This section relaxes both. Firms compete in a downstream Cournot market with endogenous price and the data-sharing decision is taken strategically before period-1 capacity is chosen. The two-period structure, the capacity-acquisition cost technology, and the lagged-learning specification of Section \ref{sec:simple} are retained. The main analysis uses a linear inverse demand for tractability. Section \ref{app:general-demand} establishes that the central result extends to a general inverse demand.

\subsection{Linear Inverse Demand}

\subsubsection{Setup}
\label{ssec:endo-setup}

\paragraph{Demand.} The inverse demand in each period is linear,
\begin{equation}
P_t(Q_t) = a - b\, Q_t, \qquad a > 0, \quad b > 0,
\end{equation}
with $Q_t = q_{1,t} + q_{2,t}$. We maintain full capacity utilization, $q_{i,t} = K_{i,t}\,\eta_{i,t}$, normalizing per-period operating costs to zero.\footnote{Any positive constant operating cost $w \geq 0$ can be absorbed into the inverse demand by replacing $a$ with $a - w$. The qualitative results are unaffected.} Linearity is adopted for tractability and clean interpretation. Section~\ref{app:general-demand} re-derives it for a general inverse demand $P(Q)$ and shows the threshold is governed by the elasticity of industry demand over the relevant output range.
 
\paragraph{Timing.} The extended game has the following five periods.
\begin{itemize}[itemsep=0pt, topsep=0.4em]
\item[(0)] \textbf{Period-0 data-sharing.} Firms simultaneously announce $s_i \in \{S, N\}$. The pool forms if and only if $s_1 = s_2 = S$.
\item[(1)] \textbf{Period-1 capacity.} Firms simultaneously choose $K_{i,1} \geq 0$.
\item[(2)] \textbf{Period-1 production.} Outputs $q_{i,1} = K_{i,1}\eta_{\min}$ are produced and the market clears at $P_1$.
\item[(3)] \textbf{Period-2 capacity.} Given $(K_{1,1}, K_{2,1})$, firms simultaneously choose $K_{i,2} \geq K_{i,1}$.
\item[(4)] \textbf{Period-2 production.} Productivities $\eta_{i,2}$ realize according to the period-0 outcome, outputs $q_{i,2} = K_{i,2}\eta_{i,2}$ are produced, and the market clears at $P_2$.
\end{itemize}

\paragraph{Sharing regime.} Period-2 productivity satisfies
\begin{equation}
\eta_{i,2} =
\begin{cases}
\eta(K_{1,1} + K_{2,1}) & \text{if } (s_1, s_2) = (S, S), \quad \text{(pooled)} \\
\eta(K_{i,1}) & \text{otherwise} \quad \quad \qquad \ \ \  \text{(fragmented)}.
\end{cases}
\end{equation}

\paragraph{Costs and profit.} The capacity-cost structure is unchanged from Section 2. The total cost is $(c_1 /2) K_{i,1}^2 + (c_2/2)(K_{i,2} - K_{i,1})^2$, with $c_1, c_2 > 0$ and $\delta \in (0,1]$. Firm $i$'s discounted two-period profit is
\begin{align}
\pi_i \;=\; & \bigl[a - b(K_{1,1} + K_{2,1})\eta_{\min}\bigr] K_{i,1}\eta_{\min} - \tfrac{c_1}{2} K_{i,1}^2 \nonumber \\
& \;+\; \delta \Bigl\{ \bigl[a - b(K_{1,2}\eta_{1,2} + K_{2,2}\eta_{2,2})\bigr] K_{i,2}\eta_{i,2} - \tfrac{c_2}{2}(K_{i,2} - K_{i,1})^2 \Bigr\}.
\end{align}

\subsubsection{Reduced-Form Characterization}

We first characterize the symmetric period-2 Cournot equilibrium taking productivities and period-1 capacities as given.

\begin{lemma}[Period-2 Cournot best response]
\label{lem:p2-cournot}
Fix $(K_{1,1}, K_{2,1})$ and $(\eta_1, \eta_2)$. Firm $i$'s period-2 best response is
\begin{equation}
K_{i,2}^{\mathrm{br}}(K_{j,2}; \eta_i, \eta_j, K_{i,1}) \;=\; \max\!\left\{K_{i,1}, \; \frac{\eta_i(a - b\eta_j K_{j,2}) + c_2 K_{i,1}}{2b\eta_i^2 + c_2}\right\}.
\end{equation}
At a symmetric continuation with $\eta_1 = \eta_2 = \eta$ and $K_{1,1} = K_{2,1} = K_1$, the unique symmetric Cournot equilibrium has
\begin{equation}
K_2^*(\eta, K_1) \;=\; \frac{a\eta + c_2 K_1}{3b\eta^2 + c_2},
\end{equation}
provided $a > 3 b \eta K_1$, in which case the irreversibility constraint is slack.
\end{lemma}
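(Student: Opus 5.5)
The plan is to solve firm $i$'s period-2 problem as a concave quadratic program in its own capacity, holding the rival's choice fixed, and then impose symmetry. With $(K_{1,1},K_{2,1})$, $(\eta_1,\eta_2)$ and $K_{j,2}$ fixed, firm $i$'s period-2 payoff is
\[
g_i(K_{i,2}) \;=\; \bigl[a - b(\eta_i K_{i,2} + \eta_j K_{j,2})\bigr]\eta_i K_{i,2} - \tfrac{c_2}{2}(K_{i,2}-K_{i,1})^2 .
\]
Its second derivative is $-(2b\eta_i^2 + c_2) < 0$, so $g_i$ is strictly concave on $[K_{i,1},\infty)$. Setting $g_i'(K) = \eta_i(a - b\eta_j K_{j,2}) - 2b\eta_i^2 K - c_2(K-K_{i,1}) = 0$ gives the unconstrained maximizer
\[
\tilde K = \frac{\eta_i(a-b\eta_jK_{j,2}) + c_2K_{i,1}}{2b\eta_i^2+c_2}.
\]
A strictly concave function of one variable restricted to a half-line $[K_{i,1},\infty)$ is maximized at the projection of its unconstrained maximizer. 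Hence the best response is $\max\{K_{i,1},\tilde K\}$, which is the stated formula. This uses the same KKT logic as Lemma~\ref{lem:period2}, with the linear revenue term replaced by Cournot revenue.

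For the symmetric continuation, I would set $\eta_1=\eta_2=\eta$ and $K_{1,1}=K_{2,1}=K_1$, and look for a fixed point $K_{1,2}=K_{2,2}=K_2$ on the interior branch. The branch condition is $K_2 = \bigl(\eta(a-b\eta K_2)+c_2K_1\bigr)/(2b\eta^2+c_2)$, which rearranges to $(3b\eta^2+c_2)K_2 = a\eta + c_2K_1$ and so delivers the stated $K_2^*$.

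The step I expect to be the main obstacle is verifying that the irreversibility constraint is slack, that is, $K_2^*\ge K_1$. Compute
\[
K_2^* - K_1 = \frac{a\eta + c_2K_1 - 3b\eta^2K_1 - c_2K_1}{3b\eta^2+c_2} = \frac{\eta(a-3b\eta K_1)}{3b\eta^2+c_2}.
\]
This is strictly positive exactly when $a>3b\eta K_1$ and $\eta>0$. Under that condition the interior branch is active for both firms, so the pair is a mutual best response and hence a Cournot equilibrium.

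Uniqueness among symmetric profiles is the remaining step. Suppose a symmetric equilibrium had a binding constraint, $K_2 = K_1 \ge \tilde K$. Substituting $K_{j,2}=K_1$ into $\tilde K \le K_1$ gives $\eta(a - b\eta K_1) \le 2b\eta^2K_1$, i.e.\ $a \le 3b\eta K_1$, which contradicts the hypothesis. On the interior branch, the symmetric fixed-point equation is linear in $K_2$ with nonzero coefficient $3b\eta^2+c_2$, so it has exactly one solution. This establishes uniqueness of the symmetric equilibrium.

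If one also wanted uniqueness among all equilibria, not just symmetric ones, I would note that the interior best-response slope $-b\eta^2/(2b\eta^2+c_2)$ has absolute value below $1$. The best-response map is then a contraction, which gives a unique fixed point. The statement, however, only requires the symmetric case.
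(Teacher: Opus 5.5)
Your proof is correct and follows the paper's route: the first-order condition of the strictly concave quadratic, projection onto $K_{i,2}\ge K_{i,1}$, the symmetric fixed point, and slackness exactly when $a>3b\eta K_1$. Your explicit exclusion of a corner symmetric equilibrium (with $\eta>0$) fills in the uniqueness claim that the paper asserts without argument.
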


\noindent Substituting $K_2^*$ into the period-2 payoff defines the symmetric reduced period-2 value
\begin{equation}
\label{eq:v2-def}
v_2(\eta, K_1) \;:=\; K_2^*(\eta, K_1)\, \eta \bigl[a - 2 b \eta K_2^*(\eta, K_1)\bigr] \;-\; \tfrac{c_2}{2}\bigl[K_2^*(\eta, K_1) - K_1\bigr]^2.
\end{equation} 
At a symmetric period-1 continuation with $K_{1,1} = K_{2,1} = K$, and given the period-0 outcome $R \in \{P, F\}$, we write
\begin{equation}
\eta^P(K) \;=\; \eta(2K), \qquad \eta^F(K) \;=\; \eta(K),
\end{equation}
and define the symmetric reduced two-period payoff
\begin{equation}
V^R(K) \;:=\; \bigl(a - 2 b \eta_{\min} K\bigr) K \eta_{\min} \;-\; \tfrac{c_1}{2} K^2 \;+\; \delta \, v_2\bigl(\eta^R(K), K\bigr), \qquad R \in \{P, F\}.
\end{equation}
For each regime, the symmetric period-1 Cournot--Nash first-period capacity is the unique interior root (when it exists) of
\begin{equation}
\label{eq:stage1-foc}
\left.\frac{\partial}{\partial K_i} \widetilde V^R(K_i, K_j) \right|_{K_i = K_j = K} \;=\; 0,
\end{equation}
where $\widetilde V^R(K_i, K_j)$ is the asymmetric two-period payoff to firm $i$ under regime $R$ before symmetry is imposed. We denote the resulting symmetric Period-1 capacities by $K^P_\star$ and $K^F_\star$, but do not characterize these in closed form.

\paragraph{Period-0 payoffs.} Given the continuation equilibrium, the period-0 normal-form game has payoffs
\begin{equation}
U^P \;:=\; V^P(K^P_\star), \qquad U^F \;:=\; V^F(K^F_\star),
\end{equation}
with $U^P$ obtained at $(S, S)$ and $U^F$ obtained at each of the other three action profiles. The profile $(N, N)$ is a Nash equilibrium for any parameter values. The profile $(S, S)$ is also a Nash equilibrium whenever $U^P \geq U^F$. Following the convention that the Pareto-dominant Nash is selected when available, we say that \emph{pooling is sustainable} when $U^P > U^F$.

\subsubsection{Exogenous Period-1 Capacity}
\label{ssec:threshold}
To extract a closed-form threshold, we consider a benchmark in which period-1 capacity is fixed at a common predetermined fleet size $K>0$. This captures settings in which the period-1 deployment scale is determined by binding capital constraints, supply constraints, procurement targets, or an installed base, rather than by the firm's optimization. In this benchmark, the period-1 component of $V^R(K)$ is identical across regimes, and the period-0 comparison reduces to
\begin{equation}
\label{eq:stage0-reduced}
U^P - U^F \;=\; \delta \, \bigl[ v_2(\eta^P, K) - v_2(\eta^F, K) \bigr], \qquad \eta^P = \eta(2K), \;\; \eta^F = \eta(K).
\end{equation}

\begin{assumption}[Interior period-2 Cournot]
\label{ass:interior-cournot}
At the predetermined fleet size $K$, the demand slope $b$ satisfies $b < a / (3 K \eta^P)$, so that the period-2 irreversibility constraint is slack at the symmetric Cournot equilibrium in both regimes.
\end{assumption}

\begin{proposition}[Sharing-sustainability threshold]
\label{prop:threshold}
Under Assumption~\ref{ass:interior-cournot} and fixed period-1 capacity $K > 0$:
\begin{itemize}[itemsep=0pt, topsep=0.4em]
\item[\textup{(a)}] In the further limit $c_2 \to \infty$ (no period-2 expansion), pooling is sustainable if and only if
\begin{equation}
\label{eq:b-star}
b \;<\; b^\star(K) \;:=\; \frac{a}{2 K\,(\eta^P + \eta^F)}.
\end{equation}
\item[\textup{(b)}] For finite $c_2 > 0$, the map $b \mapsto U^P(b) - U^F(b)$ is strictly positive as $b \to 0^+$, and is strictly decreasing in $b$ at $b = 0$. Its sign away from $b = 0$ is governed by the monotonicity of $v_2(\cdot, K)$ in $\eta$, which admits the envelope decomposition
\begin{equation}
\label{eq:envelope}
\frac{\partial v_2}{\partial \eta}
\;=\; \underbrace{K_2^*\bigl(a - 4 b \eta K_2^*\bigr)}_{\text{direct effect}}
\;-\; \underbrace{b\,\eta^2 K_2^*\,\frac{\partial K_2^*}{\partial \eta}}_{\text{strategic effect}},
\end{equation}

\end{itemize}
\end{proposition}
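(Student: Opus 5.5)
The plan is to work directly with the closed form of the symmetric period-2 Cournot equilibrium from Lemma~\ref{lem:p2-cournot}. Assumption~\ref{ass:interior-cournot} guarantees $a>3b\eta K$ for both $\eta^P$ and $\eta^F$, since $\eta^F\le\eta^P$ by monotonicity of $\eta$. Hence the interior formula $K_2^*(\eta,K)=(a\eta+c_2K)/(3b\eta^2+c_2)$ applies in both regimes. By \eqref{eq:stage0-reduced}, everything then reduces to comparing $v_2(\eta^P,K)$ with $v_2(\eta^F,K)$. Throughout I use that $\eta$ is strictly increasing, so $\eta^P=\eta(2K)>\eta(K)=\eta^F$ because $K>0$.

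For part (a), I first compute $K_2^*-K=\eta(a-3b\eta K)/(3b\eta^2+c_2)$, which is $O(1/c_2)$. It follows that $K_2^*\to K$ and that the adjustment cost $\tfrac{c_2}{2}(K_2^*-K)^2=O(1/c_2)$ vanishes. Hence $v_2(\eta,K)\to K\eta(a-2b\eta K)$. Taking differences gives
\[
v_2(\eta^P,K)-v_2(\eta^F,K)\;\to\;K(\eta^P-\eta^F)\bigl[a-2bK(\eta^P+\eta^F)\bigr].
\]
Because $K(\eta^P-\eta^F)>0$, the sign of the difference is the sign of the bracket, which is positive exactly when $b<b^\star(K)$. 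Together with $\delta>0$, this gives the stated characterization.

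For part (b), I would first derive the envelope decomposition \eqref{eq:envelope}, since it organizes the remaining computations. Write firm $i$'s period-2 payoff as
\[
\pi_i=K_i\eta\bigl(a-b\eta(K_i+K_j)\bigr)-\tfrac{c_2}{2}(K_i-K)^2,
\]
and differentiate the equilibrium value with respect to $\eta$. The term through $dK_i/d\eta$ vanishes by firm $i$'s first-order condition. The explicit $\eta$-derivative is $K_i\bigl(a-2b\eta(K_i+K_j)\bigr)$, which equals $K_2^*(a-4b\eta K_2^*)$ at symmetry; this is the direct effect. What remains is the rival-response channel $-b\eta^2K_i\,\partial K_j/\partial\eta$, which is the strategic effect.

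The same envelope logic handles the dependence on $b$. At $b=0$ we have $K_2^*=K+a\eta/c_2$ and $v_2=a\eta K+a^2\eta^2/(2c_2)$. This is strictly increasing in $\eta$, so $U^P-U^F>0$ at $b=0$, and by continuity of the closed form in $b$ it stays positive as $b\to0^+$. For the $b$-derivative, the own-choice term again vanishes. The explicit derivative is $-\eta^2K_2^*(K_i+K_j)=-2\eta^2(K_2^*)^2$. The rival-response term carries a factor $b$ and so vanishes at $b=0$. Therefore
\[
\partial_b(U^P-U^F)\big|_{b=0}=-2\delta\Bigl[(\eta^P)^2\bigl(K+a\eta^P/c_2\bigr)^2-(\eta^F)^2\bigl(K+a\eta^F/c_2\bigr)^2\Bigr]<0,
\]
because $\eta\mapsto\eta(K+a\eta/c_2)$ is strictly increasing on $\eta>0$. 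The final clause of (b) then follows from the mean-value theorem: $v_2(\eta^P,K)-v_2(\eta^F,K)=\int_{\eta^F}^{\eta^P}\partial_\eta v_2\,d\eta$, so its sign is governed by the sign of $\partial_\eta v_2$ on $[\eta^F,\eta^P]$, which \eqref{eq:envelope} decomposes.

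The main obstacle I expect is being careful that the envelope argument is legitimate for an equilibrium value rather than a single optimizer's value. The own-choice term drops out, but the rival's response does not, and that surviving term is precisely what generates the strategic effect. I would verify this by differentiating the closed form of $v_2$ once directly as a cross-check, and by checking that the interiority of Assumption~\ref{ass:interior-cournot} holds uniformly on $[\eta^F,\eta^P]$, which it does because $3b\eta K$ is increasing in $\eta$.
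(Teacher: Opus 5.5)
Your proposal is correct and follows essentially the same route as the paper. For part (a) you take the $c_2\to\infty$ limit $v_2\to K\eta(a-2bK\eta)$ and factor the difference. For part (b) you use the $b=0$ closed form, the vanishing of the $\partial K_2^*/\partial b$ terms at $b=0$ (which you correctly attribute to the own first-order condition plus the factor $b$ on the rival-response term), and the own-FOC envelope decomposition in $\eta$. Your integral representation over $[\eta^F,\eta^P]$, with interiority checked uniformly on that interval, is a slightly more careful treatment of the final clause than the paper's bare assertion.
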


\noindent The closed-form threshold in part~(a) requires both that period-1 capacity be exogenously fixed and that period-2 expansion vanish in the limit as $c_2\to\infty$, and relaxing either limit loses closed form. With $c_2 < \infty$ (part~(b)), the period-2 Cournot equilibrium makes $v_2$ depend on $b$ through $K_2^*(\eta, K)$. Two local facts survive: the limit $b \to 0^+$ flattens demand and recovers the exogenous-price baseline, where pooling is beneficial; and $U^P - U^F$ is strictly decreasing in $b$ at $b = 0$, so competition erodes the gain from pooling as soon as it is introduced. Away from $b = 0$, the envelope decomposition~\eqref{eq:envelope} isolates a direct effect, a firm's gain from its own higher productivity, and a strategic effect, the price-mediated impact of the rival's equilibrium capacity response. Neither is signed over the whole interval, which is why the finite-$c_2$ threshold admits no closed form. With $c_1 < \infty$, the symmetric period-1 first-order conditions differ across regimes, so the equilibrium fleet sizes $K^P_\star$ and $K^F_\star$ enter the period-0 comparison non-trivially. Section~\ref{ssec:wedge} characterizes the resulting capacity wedge in the limit where period-2 expansion vanishes via a sign test, without resolving its sign unconditionally. \smallskip \\
The threshold $b^\star(K)$ in~\eqref{eq:b-star} is decreasing in both $K$ and the productivity levels $(\eta^P, \eta^F)$. $a$ measures how much output the market can absorb, while the denominator reflects the scale of aggregate output across the pooled and fragmented productivity regimes. Pooling raises productivity and so joint output, so it is sustainable only when the market is large enough to absorb the extra output without the price collapsing. Industries with larger fleets, or already at higher productivity, therefore have a lower sustainability threshold, a comparative-statics statement about the benchmark only.

\subsubsection{Endogenous Period-1 Capacity}
\label{ssec:wedge}
 
We now consider the complementary benchmark in which period-1 capacity is chosen endogenously with finite cost \(c_1>0\), while period-2 capacity expansion vanishes, \(c_2\to\infty\). This isolates the period-1 investment decision, the channel switched off in Proposition~\ref{prop:threshold}. \smallskip \\
With $c_2 \to \infty$ the period-2 irreversibility constraint binds, so $K_{i,2} = K_{i,1}$ and we write $K_i := K_{i,1} = K_{i,2}$ for the single capacity of firm $i$. Firm $i$ produces $q_{i,1} = K_i\eta_{\min}$ in period~1 and $q_{i,2} = K_i\eta_{i,2}$ in period~2, with two-period profit under regime $R \in \{P, F\}$
\begin{equation}
\label{eq:wedge-profit}
\pi_i^R \;=\; \bigl[a - b\eta_{\min}(K_i + K_j)\bigr] K_i\eta_{\min} - \tfrac{c_1}{2} K_i^2 + \delta\bigl[a - b(K_i\eta_{i,2} + K_j\eta_{j,2})\bigr] K_i\eta_{i,2},
\end{equation}
where $\eta_{i,2} = \eta_{j,2} = \eta(K_i + K_j)$ under pooling and $\eta_{i,2} = \eta(K_i)$ under fragmentation. Differentiating \eqref{eq:wedge-profit} with respect to $K_i$, holding $K_j$ fixed, and evaluating at $K_i = K_j = K$, the symmetric period-1 Cournot--Nash capacity in regime $R$ solves
\begin{equation}
\label{eq:wedge-foc}
F^R(K) \;:=\; a\eta_{\min} - (3 b \eta_{\min}^2 + c_1)\,K + \delta g^R(K) \;=\; 0,
\end{equation}
where $g^R(K)$ is the symmetric evaluation of the derivative of the period-2 term,
\begin{align}
g^F(K) &= \bigl(\eta(K) + K\eta'(K)\bigr)\bigl(a - 3 b K \eta(K)\bigr), \\
g^P(K) &= \eta(2K)\bigl(a - 3 b K \eta(2K)\bigr) + K\eta'(2K)\bigl(a - 4 b K \eta(2K)\bigr).
\end{align}
 
\begin{assumption}[Period-1 regularity]
\label{ass:stage1-reg}
In the limit $c_2 \to \infty$ with $c_1 \in (0, \infty)$, for each $R \in \{P, F\}$ the function $F^R$ is strictly decreasing on $(0, \infty)$ and has a unique interior root $K^R_\star > 0$.
\end{assumption}
\noindent Assumption~\ref{ass:stage1-reg} requires the period-1 investment problem to be well-behaved (single-peaked). It holds, in particular, when the period-1 capacity cost is sufficiently convex relative to the strength of the learning channel, so that the marginal cost of capacity dominates any region of locally increasing marginal returns generated by the Cournot price interaction. It rules out multiplicity of period-1 capacities, which can arise when capacity is nearly costless and the learning curve has a sharp convex--concave transition, but we do not study that regime here.
 
\begin{proposition}[Capacity-wedge sign test]
\label{prop:wedge}
Under Assumption~\ref{ass:stage1-reg}, the symmetric period-1 capacities satisfy: $K^P_\star$ exceeds, equals, or falls short of $K^F_\star$ according as $\Delta(K^F_\star)$ is positive, zero, or negative, where at any common capacity $K$,
\begin{align}
\Delta(K) &= \Delta_0(K) + \Delta_{\mathrm{comp}}(K), \\
\Delta_0(K) &= a\Bigl[\bigl(\eta(2K) - \eta(K)\bigr) + K\bigl(\eta'(2K) - \eta'(K)\bigr)\Bigr], \\
\Delta_{\mathrm{comp}}(K) &= -3 b K\bigl(\eta(2K)^2 - \eta(K)^2\bigr) - b K^2\bigl(4\,\eta(2K)\eta'(2K) - 3\,\eta(K)\eta'(K)\bigr).
\end{align}
Moreover, we have 
\begin{itemize}[itemsep=0pt, topsep=0.4em]
\item[\textup{(i)}] $\Delta_0$ coincides with the $c_2 \to \infty$ limit of the exogenous-price capacity wedge $D$ in Proposition~\ref{prop:reversal} of the baseline, with $a$ in the role of $r$.Its sign is not determined by the monotonicity and concavity of $\eta$ alone.
\item[\textup{(ii)}] The term $-3 b K(\eta(2K)^2 - \eta(K)^2)$ in $\Delta_{\mathrm{comp}}$ is strictly negative.
\item[\textup{(iii)}] If $4\,\eta(2K)\eta'(2K) \geq 3\,\eta(K)\eta'(K)$, then $\Delta_{\mathrm{comp}}(K) < 0$, and hence $\Delta(K) < \Delta_0(K)$.
\end{itemize}
\end{proposition}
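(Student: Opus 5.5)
The argument mirrors Proposition~\ref{prop:reversal}: evaluate the pooled first-order condition at the fragmented root and use monotonicity from Assumption~\ref{ass:stage1-reg}.

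\emph{Sign test.} Since $F^F(K^F_\star)=0$, we have
\[
F^P(K^F_\star)=F^P(K^F_\star)-F^F(K^F_\star)=\delta\bigl[g^P(K^F_\star)-g^F(K^F_\star)\bigr],
\]
because the terms $a\eta_{\min}-(3b\eta_{\min}^2+c_1)K$ are common to both regimes. Assumption~\ref{ass:stage1-reg} says $F^P$ is strictly decreasing with unique root $K^P_\star$. Hence $K^P_\star>K^F_\star$ iff $F^P(K^F_\star)>0$, equality holds iff $F^P(K^F_\star)=0$, and $K^P_\star<K^F_\star$ iff $F^P(K^F_\star)<0$. Because $\delta>0$, it then suffices to verify the identity $g^P(K)-g^F(K)=\Delta(K)$ for every $K$.

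\emph{The identity.} Expand the two functions:
\[
g^F(K)=a\eta(K)+aK\eta'(K)-3bK\eta(K)^2-3bK^2\eta(K)\eta'(K),
\]
\[
g^P(K)=a\eta(2K)-3bK\eta(2K)^2+aK\eta'(2K)-4bK^2\eta(2K)\eta'(2K).
\]
Subtract and group the terms by order in $b$. The $b$-free terms give
\[
a\bigl[\eta(2K)-\eta(K)+K(\eta'(2K)-\eta'(K))\bigr]=\Delta_0(K).
\]
The $b$-terms give
\[
-3bK\bigl(\eta(2K)^2-\eta(K)^2\bigr)-bK^2\bigl(4\eta(2K)\eta'(2K)-3\eta(K)\eta'(K)\bigr)=\Delta_{\mathrm{comp}}(K).
\]
Before this, I will re-derive $g^P$ and $g^F$ from \eqref{eq:wedge-profit} to make sure the stated forms are right.

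\emph{Fragmented regime.} Differentiating $\bigl[a-b(K_i\eta(K_i)+K_j\eta(K_j))\bigr]K_i\eta(K_i)$ in $K_i$ gives
\[
\bigl(\eta+K_i\eta'\bigr)\bigl(a-bQ_2\bigr)-b\bigl(\eta+K_i\eta'\bigr)K_i\eta.
\]
At symmetry, where $Q_2=2K\eta$, this is $(\eta+K\eta')(a-3bK\eta)$, which matches $g^F$.

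\emph{Pooled regime.} Here $\eta_2=\eta(K_i+K_j)$, so $\partial\eta_2/\partial K_i=\eta'$ and $\partial Q_2/\partial K_i=\eta+(K_i+K_j)\eta'$. The derivative is
\[
(a-bQ_2)(\eta+K_i\eta')-bK_i\eta\bigl(\eta+2K\eta'\bigr).
\]
At symmetry, with $Q_2=2K\eta(2K)$, this equals
\[
\eta(a-3bK\eta)+K\eta'(a-2bK\eta-2bK\eta)=\eta(a-3bK\eta)+K\eta'(a-4bK\eta),
\]
which matches $g^P$. This bookkeeping of the pooled price effect, including the rival's output response through the shared $\eta$, is the one delicate step.

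\emph{Part (i).} Take $c_2\to\infty$ in $D$ from Proposition~\ref{prop:reversal}. The term $(r^2/c_2)[\cdots]$ vanishes, leaving $r[\eta(2x)-\eta(x)+x(\eta'(2x)-\eta'(x))]$. Replacing $r$ by $a$ gives $\Delta_0$. To show the sign is not pinned down by monotonicity and concavity alone, I will use the discussion after Proposition~\ref{prop:reversal}, or exhibit two examples:
\begin{itemize}[nosep]
\item A nearly linear concave $\eta$ on $[K,2K]$ gives $\eta'(2K)\approx\eta'(K)$, so the level term makes $\Delta_0>0$.
\item A sharply saturating $\eta$, for instance $\eta(N)=\eta_{\min}+\alpha(1-e^{-\lambda N})$ with $\lambda K$ at a suitable intermediate value, gives $\Delta_0<0$. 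To check this I will expand $\Delta_0/a=\alpha e^{-\lambda K}\bigl[(1-e^{-\lambda K})-\lambda K(1-2e^{-\lambda K})\bigr]$ and take $\lambda K$ large; the bracket is then about $1-\lambda K<0$.
\end{itemize}

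\emph{Part (ii).} Since $\eta$ is strictly increasing and positive, $\eta(2K)>\eta(K)>0$, so $\eta(2K)^2>\eta(K)^2$. With $b,K>0$, the term $-3bK(\eta(2K)^2-\eta(K)^2)$ is strictly negative.

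\emph{Part (iii).} Under the stated hypothesis the second term $-bK^2\bigl(4\eta(2K)\eta'(2K)-3\eta(K)\eta'(K)\bigr)$ is $\le 0$. Adding the strictly negative term from (ii) gives $\Delta_{\mathrm{comp}}(K)<0$, and therefore $\Delta(K)<\Delta_0(K)$.
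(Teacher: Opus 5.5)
Your proof is correct and follows the paper's own argument: the period-1 terms cancel, so $F^P-F^F=\delta(g^P-g^F)=\delta\Delta$, and evaluating at $K^F_\star$ with strict monotonicity of $F^P$ gives the three-way test, after which (i)--(iii) follow term by term exactly as you wrote; your re-derivation of $g^P$ and $g^F$ also checks out. One small algebra slip: for $\eta(N)=\eta_{\min}+\alpha(1-e^{-\lambda N})$ the bracket is $(1-e^{-\lambda K})(1-\lambda K)$, not $(1-e^{-\lambda K})-\lambda K(1-2e^{-\lambda K})$, so $\Delta_0$ is positive for $\lambda K<1$ and negative for $\lambda K>1$ --- this single family already gives both signs, which makes (i) more concrete than the paper's bare appeal to the earlier discussion.
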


\noindent Proposition~\ref{prop:wedge} is a sign test, not a sign. Whether $K^P_\star$ exceeds or falls short of $K^F_\star$ depends on $\Delta(K^F_\star)$, and through $\Delta_0$ on the shape of the learning curve, exactly as in the exogenous-price baseline. What is new under Cournot competition is $\Delta_{\mathrm{comp}}$, absent when the price is exogenous. It collects the price-mediated effect of pooled learning, since under pooling a firm's period-1 capacity raises the rival's productivity and so the rival's period-2 output. Part~(ii) shows one component of $\Delta_{\mathrm{comp}}$ is unambiguously negative, and part~(iii) gives a transparent condition under which $\Delta_{\mathrm{comp}}$ is negative in full, so that the Cournot wedge lies strictly below the exogenous-price wedge. Competition then shifts the capacity comparison toward pooled under-investment relative to the baseline. This does not by itself establish $K^P_\star < K^F_\star$ for any given economy, since the sign of $\Delta$ still depends on $\Delta_0$. \smallskip \\
The condition in part~(iii), $4\,\eta(2K)\eta'(2K) \ge 3\,\eta(K)\eta'(K)$, concerns the persistence of the learning slope rather than its magnitude. A simple sufficient condition is $\eta'(2K) \ge (3/4)\eta'(K)$. The exact condition is weaker, since it only requires $\eta'(2K) \ge \frac{3\eta(K)}{4\eta(2K)}\eta'(K)$. A learning curve that remains steep across $[K, 2K]$ delivers $\Delta_{\mathrm{comp}}(K) < 0$, whereas one that saturates sharply may violate the condition and leave its sign indeterminate. What matters is then not how fast an industry learns but how long it keeps learning. The competitive cost of sharing, that pooled data strengthens the rival, is large and persistent where learning stays steep, and weak where it saturates early, so an industry that learns quickly but saturates sharply can carry a low competitive cost despite a high peak learning rate.
\paragraph{Discussion.} The analysis qualifies the baseline's policy prescription. The baseline yields an unambiguous rule. Pooled regimes underinvest in early adoption (Proposition~\ref{prop:underinvestment} of the baseline), so a planner should subsidize period-1 deployment. The competitive correction $\Delta_{\mathrm{comp}}$ is a force that rule omits, and under the condition above it pushes the wedge toward $K^P_\star < K^F_\star$, the more so the more intense is downstream competition. The direction of the optimal intervention is therefore not a primitive. In weakly competitive industries the case for subsidizing early deployment stands, whereas in strongly competitive ones the same data-pooling architecture can blunt the incentive to deploy early, so a uniform early-adoption subsidy may be misdirected. Moreover, the welfare ordering of the baseline does not carry over. It rests on Lemma~\ref{lem:ladder} of the baseline, whose transparent sufficient condition, per-firm pooled value weakly exceeds per-firm fragmented value at every common capacity, fails under Cournot competition whenever $b > b^\star(K)$, since then $v_2(\eta^P, K) < v_2(\eta^F, K)$. We therefore do not extend the welfare ladder to the Cournot setting, and leave open whether \(W^P \gtrless W^F\) outside the benchmark with exogenous period-1 capacity and vanishing period-2 expansion.

\subsection{General Inverse Demand}
\label{app:general-demand}

This section shows that the closed-form sustainability threshold of Proposition~\ref{prop:threshold}(a) does not depend on the linear demand specification. Let $P : \mathbb{R}_{>0} \to \mathbb{R}_{>0}$ be a generic twice-differentiable inverse demand, and let $T(Q) := Q\,P(Q)$ denote total industry revenue. All other elements of the model in Section~\ref{ssec:endo-setup} are retained, including full capacity utilization, which we maintain whenever the symmetric Cournot marginal revenue is non-negative at the relevant aggregate output. \smallskip \\
In the benchmark with exogenous period-1 capacity and no period-2 expansion, each firm operates a fixed capacity \(K\) in both periods, so \(K_{i,1}=K_{i,2}=K\). Aggregate period-2 output at a symmetric continuation with common productivity \(\eta\) is therefore \(Q_2(\eta)=2K\eta\), and the per-firm reduced value is
\begin{equation}
v_2(\eta, K) \;=\; P(2K\eta) \cdot K\eta \;=\; \tfrac{1}{2}\, T(2K\eta).
\end{equation}
Substituting into \eqref{eq:stage0-reduced},
\begin{equation}
\label{eq:general-sustainability}
U^P - U^F \;=\; \tfrac{\delta}{2}\,\bigl[\, T(2K\eta^P) - T(2K\eta^F)\,\bigr].
\end{equation}

\begin{proposition}[General sustainability characterization]
\label{prop:general-threshold}
Under the assumptions above, pooling is sustainable if and only if $T(2K\eta^P) > T(2K\eta^F)$.
\end{proposition}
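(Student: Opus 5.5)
The argument is essentially a chain of identities ending in the definition of sustainability. I would begin by recalling that, under the convention of Section~\ref{ssec:endo-setup}, pooling is sustainable exactly when $U^P > U^F$. In the benchmark with exogenous period-1 capacity, the period-1 component of $V^R(K)$ is identical across regimes: it depends only on $K$, $\eta_{\min}$ and $P$, not on $R$. So it cancels, and \eqref{eq:stage0-reduced} holds verbatim under general demand, with $v_2$ now denoting the general-demand reduced value.

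The key step is to justify the reduced value $v_2(\eta,K) = P(2K\eta)\,K\eta = \tfrac12 T(2K\eta)$.
\begin{itemize}
\item With no period-2 expansion ($c_2\to\infty$, or directly $K_{i,2}=K_{i,1}=K$), the expansion cost term vanishes.
\item By full capacity utilization, which is maintained whenever symmetric Cournot marginal revenue is non-negative, each firm produces $q_{i,2}=K\eta$.
\item Hence aggregate output is $2K\eta$, and per-firm revenue is $P(2K\eta)K\eta$, which is half of industry revenue $T(2K\eta)$.
\end{itemize}
Substituting the two regimes $\eta^P=\eta(2K)$ and $\eta^F=\eta(K)$ gives \eqref{eq:general-sustainability}.

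Since $\delta\in(0,1]$ is strictly positive, $U^P-U^F>0$ if and only if $T(2K\eta^P)>T(2K\eta^F)$, which is the claim. As a consistency check, I would verify that linear demand recovers Proposition~\ref{prop:threshold}(a). With $T(Q)=aQ-bQ^2$, the difference
\[
T(2K\eta^P)-T(2K\eta^F)=2K(\eta^P-\eta^F)\bigl[a-2bK(\eta^P+\eta^F)\bigr].
\]
Because $\eta$ is strictly increasing, $\eta^P>\eta^F$, so the sign is that of $a-2bK(\eta^P+\eta^F)$, i.e.\ $b<b^\star(K)$.

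I expect no step to be genuinely hard. The one point needing care is that the full-utilization maintained assumption must hold in both regimes, so that outputs really equal $K\eta^R$ rather than being withheld. Otherwise $v_2$ would be an optimized Cournot value rather than $\tfrac12T$. I would state explicitly that the result is conditional on this assumption at both $2K\eta^P$ and $2K\eta^F$.
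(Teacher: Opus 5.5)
Your proposal is correct and follows the paper's argument: the period-1 terms cancel across regimes, the reduced period-2 value under fixed capacity and full utilization is $\tfrac12 T(2K\eta)$, which gives \eqref{eq:general-sustainability}, and $\delta>0$ finishes the claim. Your linear-demand consistency check and your explicit full-utilization caveat are sound additions; the paper covers these points only in the surrounding text.
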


\noindent The standard identity $T'(Q) = P(Q)\bigl(1 - 1/\varepsilon_D(Q)\bigr)$, where $\varepsilon_D(Q) := -P(Q)/[Q\,P'(Q)]$ is the local demand elasticity, sharpens Proposition~\ref{prop:general-threshold}. Pooling is sustainable when $\varepsilon_D > 1$ throughout $[2K\eta^F,\, 2K\eta^P]$ and is not sustainable when $\varepsilon_D < 1$ throughout that interval. If $\varepsilon_D$ crosses unity inside it, the question is settled by the direct comparison in Proposition~\ref{prop:general-threshold}. The two leading specifications recover this. Linear inverse demand $P(Q) = a - bQ$ gives the threshold $b < a/[2K(\eta^P + \eta^F)]$ of Proposition~\ref{prop:threshold}(a), which inherits dependence on $K$ and $(\eta^P, \eta^F)$ because the local elasticity varies with $Q$ and constant-elasticity demand $P(Q) = A\,Q^{-1/\varepsilon}$ gives the parameter-free threshold $\varepsilon > 1$, because the elasticity is constant. 
\paragraph{Discussion.} The policy content is that the sustainability of data pooling is governed by demand elasticity over the output interval induced by the two productivity regimes $[2K\eta^F,\,2K\eta^P]$, rather than by the productivity parameters in isolation. Pooling raises both firms' productivity and so aggregate output. Whether this raises or lowers total industry revenue, and so whether sharing is self-sustaining, depends only on whether demand is elastic or inelastic over the relevant output range. The unit-elasticity line $\varepsilon_D = 1$ is the policy boundary. In elastic industries the standard prescription of encouraging data pooling applies, while in inelastic industries the same architecture is self-defeating. This also flags a counterintuitive case for policymakers. Necessity-type goods, whose demand is typically inelastic, are precisely the industries where mandated data sharing is least likely to be beneficial. We do not claim Proposition~\ref{prop:general-threshold} extends to finite $c_2$ or to endogenous period-1 capacity. With $c_2 < \infty$ the period-2 Cournot equilibrium $K_2^*$ depends on $P$ through the firm's first-order condition, and the reduction to a one-line comparison of $T(\cdot)$ no longer holds.

\section{Numerical Exploration under General Demand}
\label{sec:numerical-general}

The closed-form results of Section~\ref{sec:endo-sharing} rest on a linear inverse demand and on shutting down either period-1 investment or period-2 expansion. This section removes those simplifications. We solve the full period-0/1/2 game numerically for an arbitrary inverse demand $P(Q)$, with period-1 capacity chosen endogenously and period-2 expansion active. The aim is not to verify a theorem but to see which qualitative patterns survive once the convenient assumptions are dropped. \smallskip \\
We report the following three representative demand shapes, each standing for a different
kind of market.
\begin{itemize}[itemsep=0pt, topsep=0.4em]
\item \textbf{Isoelastic (elastic market).} Buyers are price-sensitive and the market is ``deep''. Extra output is absorbed with only a small price concession, so industry revenue keeps rising with output. This is the growth-stage, far-from-saturation case.
\item \textbf{Logit (maturing market).} An S-shaped, saturating market. Output can still expand early, but each additional unit depresses price more sharply as the ceiling approaches. This is the realistic intermediate case of an industry heading toward maturity.
\item \textbf{Convex (capacity-limited market).} Price collapses quickly as output grows and the market cannot absorb much. This is the necessity-type case in which extra supply mainly destroys price.
\end{itemize}

\subsection{Equilibrium Purchasing Paths}
Figure~\ref{fig:sim-paths} traces the symmetric equilibrium capacity in each period, under pooling ($P$) versus fragmentation ($F$), for two competing firms. The main message is that the effect of pooling on early purchasing has no universal sign, and it is set by the shape of demand. In the elastic and maturing markets, pooling slightly lowers period-1 capacity, as each firm leans on the rival's contribution to the shared learning stock. In the capacity-limited market the comparison reverses. Pooling raises early purchasing, yet sharing is no longer self-sustaining. In every case the path is back-loaded ($K_2 > K_1$). Firms deploy in period~1 partly to seed the learning stock, then expand in period~2 once higher productivity is realized.

\begin{figure}[h!]
\centering
\includegraphics[width=\textwidth]{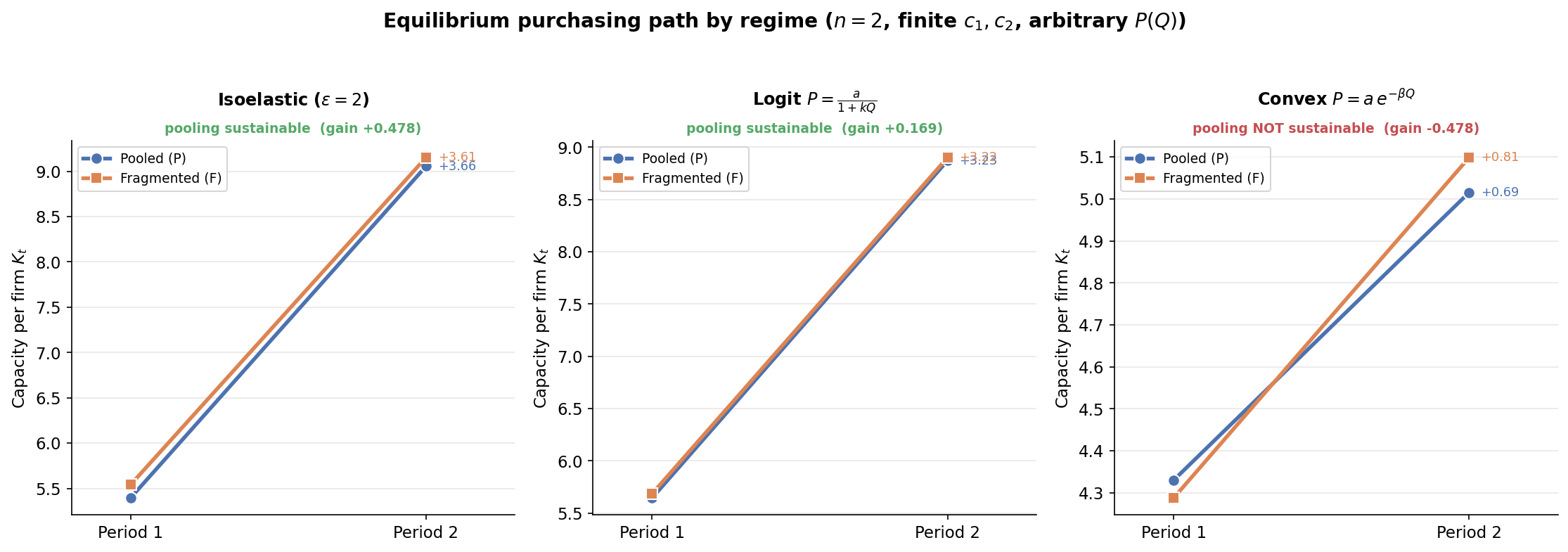}
\caption{Symmetric equilibrium capacity in period~1 and period~2 under pooled
($P$) and fragmented ($F$) regimes, for three inverse-demand shapes ($n=2$,
finite $c_1, c_2$). Whether pooling raises or lowers early deployment depends
on the demand shape, and flips sign between the elastic/maturing and the capacity-limited
cases.}
\label{fig:sim-paths}
\end{figure}

\subsection{The Role of Competition}

Figure~\ref{fig:sim-competition} varies the number of competing firms $n$. The left panel plots the pooling gain $U^P - U^F$ and the right panel plots the period-1 capacity wedge $K_1^P - K_1^F$, i.e.\ whether pooling raises or cuts early purchasing. Two high-level points emerge. First, the claim that competition necessarily erodes the value of pooling is not a universal law. The gain stays positive and even rises slightly in the elastic market, erodes gradually in the maturing market, and is negative throughout in the capacity-limited market. The direction in which competition bites is itself governed by demand elasticity. Second, the early-purchasing wedge inherits the same split, pooling depresses early deployment where the market is elastic or maturing, and lifts it where the market is capacity-limited, with both effects settling to a plateau as the market crowds.

\begin{figure}[h!]
\centering
\includegraphics[width=\textwidth]{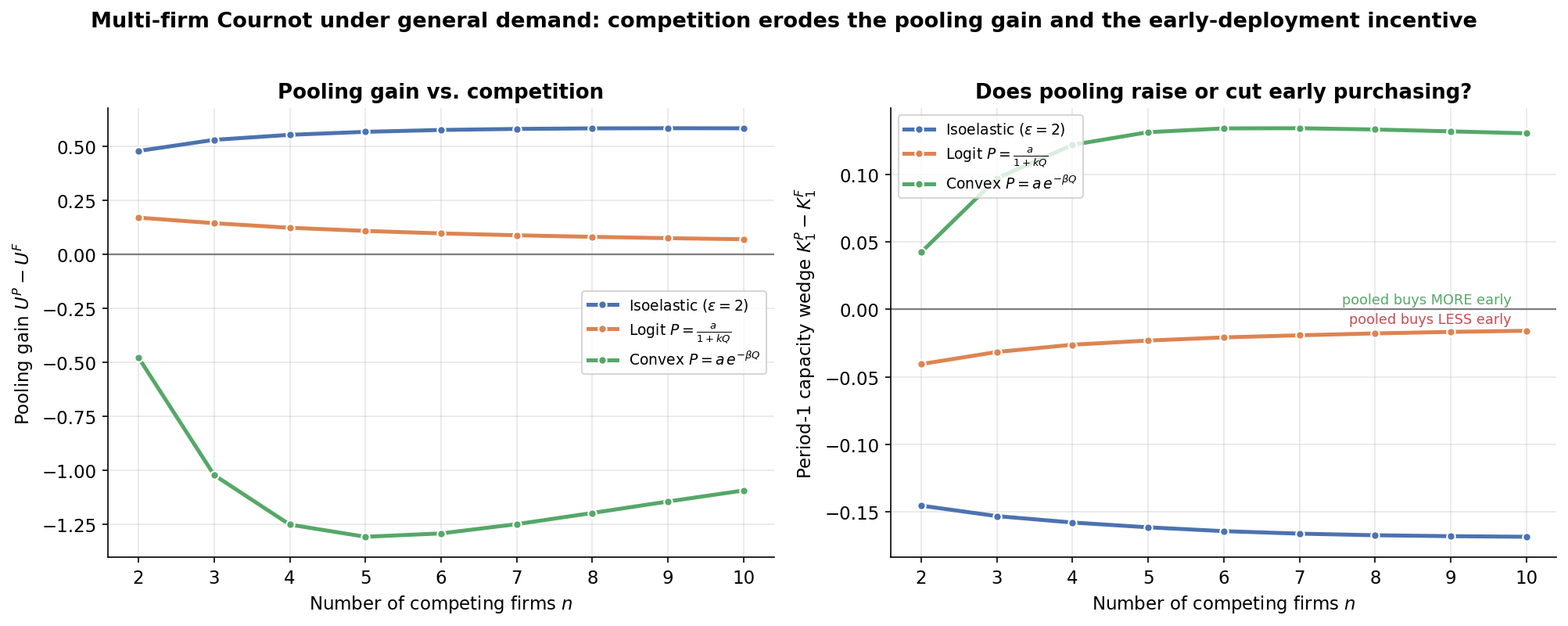}
\caption{Effect of competition under general demand. Left: pooling gain
$U^P - U^F$ as the number of firms $n$ grows. Right: period-1 capacity wedge
$K_1^P - K_1^F$. The sign and the competitive comparative static both depend on
the demand shape rather than on a single threshold.}
\label{fig:sim-competition}
\end{figure}

\noindent Together the two figures refine the closed-form analysis. Competition intensity remains a first-order conditioning variable for data-pooling policy, but under general demand its effect is mediated by the elasticity of the market. The same pooling architecture is welfare-enhancing in deep, elastic industries and self-defeating in capacity-limited ones, with the maturing case in between. There is accordingly no uniform prescription for early-deployment subsidies or mandated interoperability. The right intervention depends jointly on how elastic demand is and on how many firms compete.

\section{Future Directions}

\paragraph{Heterogeneous downstream firms.}
The current model assumes symmetric adopting firms. A natural extension is to allow firms to differ in deployment costs, installed bases, demand exposure, or data-generating efficiency. For example, firm \(i\) could have cost parameters \((c_{1i},c_{2i})\) and data productivity \(\alpha_i\), so that the pooled learning stock becomes
\[
    N_2=\sum_i \alpha_i K_{i,1},
\]
while fragmented learning is \(N_{i,2}=\alpha_i K_{i,1}\). This would make the distribution of early deployment matter, rather than only aggregate capacity. Low-cost or data-rich firms may generate most of the learning spillover, while high-cost firms may prefer to free ride on the pool. Studying this heterogeneity would clarify when data sharing redistributes value across firms, when asymmetric firms voluntarily sustain pooling, and whether policy should target subsidies toward firms with the largest learning externalities rather than toward the industry as a whole.
\paragraph{Vendor as a third strategic player.}
The current model treats the platform vendor as a passive supplier. A natural extension introduces a profit-maximizing vendor that sets unit prices across periods. Firms would pay the vendor price for each unit acquired, in addition to firm-side integration costs governed by $c_1$ and $c_2$. If the vendor sets prices $p_1$ and $p_2$, its profit is
\begin{equation}
    \label{eq:vendor-profit}
    \Pi_V(p_1,p_2)
    = (p_1-\kappa_V)\sum_{i=1}^{2} K_{i,1}(p_1,p_2)
    + (p_2-\kappa_V)\sum_{i=1}^{2}
      \bigl[K_{i,2}(p_1,p_2)-K_{i,1}(p_1,p_2)\bigr],
\end{equation}
where $\kappa_V$ is the vendor's marginal production cost and $K_{i,2}(p_1,p_2)-K_{i,1}(p_1,p_2)$ is firm~$i$'s period-2 incremental capacity purchase. This pricing problem is dynamic. A lower period-1 price can induce greater early deployment, generate more learning data, raise period-2 productivity, and increase later willingness to pay. Under sufficiently strong learning effects, a forward-looking vendor may therefore price early units below the level chosen by a static vendor. This suggests that the vendor may partially internalize the learning externality ignored by downstream adopters, although the alignment between vendor incentives and the social planner is generally incomplete. A commitment problem may also arise. The vendor would like to commit to low early prices and higher later prices, but without commitment firms may anticipate future markups and reduce early adoption. Characterizing this pricing game would clarify when private platform incentives can substitute for, or complement, early-deployment subsidies.

\paragraph{Multiple vendors, multiple firms, and a mean-field formulation.}
A second extension allows for $V$ competing vendors and $F$ adopting firms. Each firm chooses a vendor and a capacity path, while each vendor sets prices and operates its own data pool. Let $m_v$ denote the mass of firms using vendor~$v$, and let $N_v$ denote the corresponding learning stock. In a large-market limit with $F\to\infty$, the model can potentially be approximated by a mean-field formulation in which each individual firm responds to the aggregate state
\[
    \bigl\{(m_v,\,N_v,\,p_v)\bigr\}_{v=1}^{V}.
\]
The learning stock on platform~$v$ would evolve according to the aggregate deployment of firms choosing that platform, for example
\begin{equation}
    \label{eq:platform-learning}
    N_{v,2} = \int_{\mathcal{I}_v} K_{i,1}\,di,
\end{equation} where $\mathcal{I}_v$ is the set of firms adopting platform~$v$. This formulation would make it possible to study equilibrium platform shares, price paths, and learning trajectories in markets with many adopters. Several conjectures naturally arise. First, cumulative learning may generate tipping dynamics, where small early differences in platform share become persistent advantages. Second, welfare need not be monotone in the number of vendors. More vendors increase product-market and platform competition, but they may also fragment deployment data and slow learning. Third, cross-platform data interoperability may recover part of the welfare gain from pooling without requiring a monopoly platform. A mean-field version of the model would also allow comparative statics with respect to firm-size distributions and industry concentration, which are difficult to analyze in the two-firm setting.

\bibliographystyle{plainnat}
\bibliography{ref}

\newpage

\appendix
\section{Proofs}
\label{app:proofs}

\subsection{Proofs of Lemmas}
\subsubsection{Proof of Lemma~\ref{lem:period2}}
\label{app:proof-lem-period2}
\begin{proof}
Let $y=K_{i,2}-K_{i,1}\geq 0$. The problem becomes $\max_{y\geq 0}\; r(K_{i,1}+y)\eta(N)-(c_2/2)y^2$. The objective is strictly concave in $y$; the unconstrained first-order condition gives $y^*=r\eta(N)/c_2 > 0$, so the irreversibility constraint is slack.
\end{proof}
\subsubsection{Proof of Lemma~\ref{lem:ladder}}
\label{app:proof-lem-ladder}
\begin{proof}
By the Nash condition, firm $i$'s best response to a rival playing $x_{\mathrm{NP}}$ is to play $x_{\mathrm{NP}}$ itself, so
\[
    V_i^{\mathrm{NP}}(x_{\mathrm{NP}},x_{\mathrm{NP}}) \;\geq\; V_i^{\mathrm{NP}}(x_{\mathrm{NF}},x_{\mathrm{NP}}).
\]
The left-hand side equals $V_{\mathrm{SP}}(x_{\mathrm{NP}})$ by the identity $V_i^{\mathrm{NP}}(x,x)=V_{\mathrm{SP}}(x)$. For the right-hand side, definition~\eqref{eq:V-NP} gives
\[
    V_i^{\mathrm{NP}}(x_{\mathrm{NF}},x_{\mathrm{NP}})
    \;=\; r\eta_{\min}\,x_{\mathrm{NF}}
    - \tfrac{c_1}{2}\,x_{\mathrm{NF}}^2
    + \delta\,\phi\bigl(x_{\mathrm{NF}},\,x_{\mathrm{NF}}+x_{\mathrm{NP}}\bigr).
\]
Since $\phi(K,\cdot)$ is non-decreasing and $x_{\mathrm{NF}}+x_{\mathrm{NP}} \geq x_{\mathrm{NF}}$, we have
\[
    \phi\bigl(x_{\mathrm{NF}},\,x_{\mathrm{NF}}+x_{\mathrm{NP}}\bigr)
    \;\geq\; \phi\bigl(x_{\mathrm{NF}},\,x_{\mathrm{NF}}\bigr),
\]
strict if $\eta$ is strictly increasing and $x_{\mathrm{NP}}>0$. So, it follows that
\[
    V_i^{\mathrm{NP}}(x_{\mathrm{NF}},x_{\mathrm{NP}})
    \;\geq\; r\eta_{\min}\,x_{\mathrm{NF}} - \tfrac{c_1}{2}\,x_{\mathrm{NF}}^2 + \delta\,\phi(x_{\mathrm{NF}},\,x_{\mathrm{NF}})
    \;=\; V_{\mathrm{NF}}(x_{\mathrm{NF}}).
\]
Chaining the two inequalities gives $V_{\mathrm{SP}}(x_{\mathrm{NP}})\geq V_{\mathrm{NF}}(x_{\mathrm{NF}})$, strict under strict monotonicity of $\eta$.
\end{proof}

\subsubsection{Proof of Lemma~\ref{lem:p2-cournot}}
\label{app:proof-lem-p2-cournot}
\begin{proof}
The unconstrained period-2 first-order condition is
\[
\eta_i \bigl(a - 2 b \eta_i K_{i,2} - b \eta_j K_{j,2}\bigr) \;=\; c_2(K_{i,2} - K_{i,1}),
\]
which, after solving for $K_{i,2}$ and imposing $K_{i,2} \geq K_{i,1}$, yields the displayed best-response. Setting $K_{1,2} = K_{2,2} = K_2$ at symmetry gives $\eta(a - 3 b \eta K_2) = c_2(K_2 - K_1)$, and rearrangement yields the formula. The constraint is slack at the symmetric solution if and only if $K_2^* > K_1$, equivalently $a > 3 b \eta K_1$.
\end{proof}

\subsection{Proofs of Propositions}
\subsubsection{Proof of Proposition~\ref{prop:ladder}}
\label{app:proof-prop-ladder}
\begin{proof}
By \eqref{eq:welfare}, $W^*_{\mathrm{NF}}=2V_{\mathrm{NF}}(x_{\mathrm{NF}})$ and $W^*_{\mathrm{NP}}=2V_{\mathrm{SP}}(x_{\mathrm{NP}})$. The welfare-ladder lemma (Lemma~\ref{lem:ladder}) gives $V_{\mathrm{SP}}(x_{\mathrm{NP}})\geq V_{\mathrm{NF}}(x_{\mathrm{NF}})$, so $W^*_{\mathrm{NP}}\geq W^*_{\mathrm{NF}}$, strict whenever $\eta$ is strictly increasing on the relevant domain. For the second inequality, the planner maximizes $V_{\mathrm{SP}}$ over a symmetric feasible set that contains $x_{\mathrm{NP}}$, so $V_{\mathrm{SP}}(x_{\mathrm{NP}})\leq V_{\mathrm{SP}}(x_{\mathrm{SP}})$ and $W^*_{\mathrm{NP}}\leq W^*_{\mathrm{SP}}$. Strict concavity of $V_{\mathrm{SP}}$ gives strict inequality whenever $x_{\mathrm{NP}}\neq x_{\mathrm{SP}}$.
\end{proof}

\subsubsection{Proof of Proposition~\ref{prop:underinvestment}}
\label{app:proof-prop-underinvestment}
\begin{proof}
At any common first-period capacity $x$, we have
\[
\begin{aligned}
    F_{\mathrm{SP}}(x)-F_{\mathrm{NP}}(x)
    &=
    \delta\left[
        r x\eta'(2x)
        +\frac{r^2}{c_2}\eta(2x)\eta'(2x)
    \right].
\end{aligned}
\]
This term is weakly positive and it is strictly positive when $\eta'(2x)>0$. Since $F_{\mathrm{NP}}(x_{\mathrm{NP}})=0$, we have $F_{\mathrm{SP}}(x_{\mathrm{NP}}) = F_{\mathrm{SP}}(x_{\mathrm{NP}})-F_{\mathrm{NP}}(x_{\mathrm{NP}}) \geq 0$. Because $F_{\mathrm{SP}}$ is strictly decreasing and has the unique root $x_{\mathrm{SP}}$, this implies $x_{\mathrm{NP}}\leq x_{\mathrm{SP}}$, with strict inequality when $\eta'>0$ on the relevant domain. So, $K^*_{\mathrm{NP},1}\leq K^*_{\mathrm{SP},1}$. For period 2, Lemma~\ref{lem:period2} gives
\[
    K^*_{\mathrm{NP},2}=x_{\mathrm{NP}}+\frac{r\eta(2x_{\mathrm{NP}})}{c_2},
    \qquad
    K^*_{\mathrm{SP},2}=x_{\mathrm{SP}}+\frac{r\eta(2x_{\mathrm{SP}})}{c_2}.
\]
The map $g(x)=x+r\eta(2x)/c_2$ is weakly increasing and strictly increasing if $\eta'\geq 0$. Since $x_{\mathrm{NP}}\leq x_{\mathrm{SP}}$, we obtain $K^*_{\mathrm{NP},2}\leq K^*_{\mathrm{SP},2}$. There is no direct period-2 learning externality in this two-period lagged-learning model. The planner's extra term appears in the first-period marginal condition because period-1 data raises both firms' period-2 productivity. Period-2 underinvestment is inherited from the period-1 data-generation wedge.
\end{proof}

\subsubsection{Proof of Proposition~\ref{prop:reversal}}
\label{app:proof-prop-reversal}
\begin{proof}
At any common $x$, $F_{\mathrm{NP}}(x)-F_{\mathrm{NF}}(x)=\delta D(x)$ by direct subtraction of \eqref{eq:F-NF}--\eqref{eq:F-NP}. Since $F_{\mathrm{NF}}(K^*_{\mathrm{NF},1})=0$, $F_{\mathrm{NP}}(K^*_{\mathrm{NF},1})=\delta D(K^*_{\mathrm{NF},1})$. The function $F_{\mathrm{NP}}$ is strictly decreasing with unique root $K^*_{\mathrm{NP},1}$, so the root lies strictly to the right of $K^*_{\mathrm{NF},1}$ if and only if $F_{\mathrm{NP}}(K^*_{\mathrm{NF},1})>0$, which is equivalent to $D(K^*_{\mathrm{NF},1})>0$ because $\delta>0$. The equality case is identical.
\end{proof}

\subsubsection{Proof of Proposition~\ref{prop:cs}}
\label{app:proof-prop-cs}
\begin{proof}
\emph{Parts (a) and (c).} By \eqref{eq:welfare}, $\Delta_{\mathrm{pool}}=2[V_{\mathrm{SP}}(x_{\mathrm{NP}})-V_{\mathrm{NF}}(x_{\mathrm{NF}})]$. Lemma~\ref{lem:ladder} gives the inequality, weak in general and strict whenever $\eta$ is strictly increasing. Similarly $\Delta_{\mathrm{coord}}=2[V_{\mathrm{SP}}(x_{\mathrm{SP}})-V_{\mathrm{SP}}(x_{\mathrm{NP}})]\geq 0$ since $x_{\mathrm{SP}}$ maximizes $V_{\mathrm{SP}}$ and strict whenever $x_{\mathrm{NP}}\neq x_{\mathrm{SP}}$. By Proposition~\ref{prop:underinvestment}, $x_{\mathrm{NP}}<x_{\mathrm{SP}}$ whenever $\eta'>0$ on the relevant domain. \smallskip \\
\emph{Part (b).} With $\eta(N)\equiv\eta_{\min}$, $\eta'\equiv 0$. The continuation value reduces to $\phi(K,N)\equiv rK\eta_{\min} + r^2\eta_{\min}^2/(2c_2)$, which is independent of $N$. So, $V_{\mathrm{NF}}, V_{\mathrm{SP}}, V_i^{\mathrm{NP}}(\cdot,x_j)$ coincide as functions of $x$ (and of $x_i$, respectively) and the three FOCs collapse to $r\eta_{\min}-c_1 x + \delta r\eta_{\min}=0$, with common root $x_{\mathrm{NF}}=x_{\mathrm{NP}}=x_{\mathrm{SP}}=(1+\delta)r\eta_{\min}/c_1$. Therefore $W^*_{\mathrm{NF}}=W^*_{\mathrm{NP}}=W^*_{\mathrm{SP}}$ and both wedges vanish.
\end{proof}

\subsubsection{Proof of Proposition~\ref{prop:threshold}}
\label{app:proof-prop-threshold}
\begin{proof}
\emph{Part (a).} As $c_2 \to \infty$, the symmetric Cournot solution in Lemma~\ref{lem:p2-cournot} satisfies $K_2^*(\eta, K) \to K$, since
\[
K_2^*(\eta, K) - K \;=\; \frac{a\eta - 3 b \eta^2 K}{3 b \eta^2 + c_2} \;\to\; 0,
\]
and the expansion-cost term satisfies $(c_2/2)(K_2^* - K)^2 = (c_2/2)(a\eta - 3 b \eta^2 K)^2/(3 b \eta^2 + c_2)^2 \to 0$. So, in this limit
\[
v_2(\eta, K) \;\longrightarrow\; K \eta \bigl(a - 2 b K \eta\bigr).
\]
Substituting, we have
\[
v_2(\eta^P, K) - v_2(\eta^F, K) \;=\; K(\eta^P - \eta^F)\,\bigl[\,a - 2 b K (\eta^P + \eta^F)\,\bigr]. 
\]
Strict monotonicity of $\eta$ gives $\eta^P > \eta^F$, so the sign of the difference in \eqref{eq:stage0-reduced} equals the sign of $a - 2 b K(\eta^P + \eta^F)$, which yields the threshold \eqref{eq:b-star}. \smallskip \\
\emph{Part (b).} The map $b \mapsto K_2^*(\eta, K) = (a\eta + c_2 K)/(3 b \eta^2 + c_2)$ is continuous in $b$ on the stated interval, therefore so is $b \mapsto v_2(\eta, K)$ for each fixed $\eta$. At $b = 0$, the inverse demand is constant at $a$ and Lemma~\ref{lem:p2-cournot} gives $K_2^*(\eta, K) = K + a\eta/c_2$. Substituting into \eqref{eq:v2-def},
\[
v_2(\eta, K)\big|_{b=0} \;=\; a K \eta + \frac{a^2 \eta^2}{2 c_2},
\]
which is strictly increasing in $\eta$. Combined with $\eta^P > \eta^F$, this gives $U^P(0^+) - U^F(0^+) > 0$, and by continuity the inequality persists in a right-neighborhood of $b = 0$. Moreover, differentiating $v_2$ in \eqref{eq:v2-def} with respect to $b$ and evaluating at $b = 0$, the terms in $\partial K_2^*/\partial b$ cancel and $\partial v_2/\partial b|_{b=0} = -2\,\eta^2(K + a\eta/c_2)^2$. Since $\eta \mapsto \eta K + a\eta^2/c_2$ is strictly positive and strictly increasing on $\eta > 0$, its square is too, so with $\eta^P > \eta^F$ the difference $\partial_b(U^P - U^F)|_{b=0}$ is strictly negative. The sign of $U^P - U^F$ away from $b = 0$ is governed by the monotonicity of $v_2(\cdot, K)$ in $\eta$. Writing the period-2 payoff before symmetry as $\pi_2(K_i, K_j; \eta) = [a - b\eta(K_i + K_j)] K_i \eta - (c_2/2)(K_i - K)^2$ and using the firm's own period-2 first-order condition gives the envelope decomposition
\[
\frac{\partial v_2}{\partial \eta}
\;=\; \underbrace{K_2^*\bigl(a - 4 b \eta K_2^*\bigr)}_{\text{direct effect}}
\;-\; \underbrace{b\,\eta^2 K_2^*\,\frac{\partial K_2^*}{\partial \eta}}_{\text{strategic effect}},
\]
where the direct effect is the gain from a firm's own higher productivity and the strategic effect captures the price-mediated impact of the rival's equilibrium capacity response, absent under exogenous pricing. Neither term is signed unconditionally. The direct effect is positive for sufficiently small $b$ but can become negative for high-productivity states such as $\eta=\eta^P$, while the strategic effect has the opposite sign of $\partial K_2^*/\partial\eta$. This ambiguity precludes a closed-form threshold for finite $c_2$.
\end{proof}

\subsubsection{Proof of Proposition~\ref{prop:wedge}}
\label{app:proof-prop-wedge}
\begin{proof}
The period-1 component of $\partial\pi_i^R/\partial K_i$, evaluated at $K_i = K_j = K$, equals $a\eta_{\min} - (3 b \eta_{\min}^2 + c_1)K$ and is common to both regimes. The period-2 component is $\delta\, g^R(K)$, with $g^F$ and $g^P$ obtained by differentiating the period-2 term of \eqref{eq:wedge-profit} and imposing symmetry. This gives \eqref{eq:wedge-foc}. Subtracting,
\[
F^P(K) - F^F(K) = \delta\bigl(g^P(K) - g^F(K)\bigr) = \delta\Delta(K),
\]
and collecting the terms of $g^P - g^F$ by their dependence on $a$ and on $b$ yields $\Delta = \Delta_0 + \Delta_{\mathrm{comp}}$. Under Assumption~\ref{ass:stage1-reg}, $F^P$ is strictly decreasing with unique root $K^P_\star$, and $F^F(K^F_\star) = 0$. So, $F^P(K^F_\star) = F^P(K^F_\star) - F^F(K^F_\star) = \delta\,\Delta(K^F_\star)$, and strict monotonicity of $F^P$ gives the stated three-way comparison since $\delta > 0$. \smallskip \\
For (i), every term of $\Delta_{\mathrm{comp}}$ is proportional to $b$, so $\Delta(K)|_{b=0} = \Delta_0(K) = a[\eta(2K) - \eta(K) + K(\eta'(2K) - \eta'(K))]$. This is the $c_2 \to \infty$ limit of $D$ in Proposition~\ref{prop:reversal} of the baseline after the substitution $r \mapsto a$, and that $D$ admits both signs is established there. For (ii), $\eta(2K) > \eta(K) > 0$ by strict monotonicity and positivity of $\eta$, with $b, K > 0$. For (iii), the stated inequality makes the second term of $\Delta_{\mathrm{comp}}$ non-positive while the first is strictly negative by (ii), so $\Delta_{\mathrm{comp}}(K) < 0$.
\end{proof}

\subsubsection{Proof of Proposition~\ref{prop:general-threshold}}
\label{app:proof-prop-general-threshold}
\begin{proof}
Immediate from \eqref{eq:general-sustainability} and $\delta > 0$.
\end{proof}

\end{document}